\newcommand{\CC}{\mathbb{C}}
\newcommand{\E}{\mathbb{E}}
\newcommand{\F}{\mathbb{F}}
\newcommand{\tr}{\textnormal{tr}}
\newcommand{\R}{\mathbb{R}}
\theoremstyle{plain}
\newtheorem{thm}{Theorem}
\newtheorem{defn}[thm]{Definition} 
\newtheorem{fact}[thm]{Fact} 
\newtheorem{example}[thm]{Example} 
\newtheorem{proposition}[thm]{Proposition}
\newtheorem{lemma}[thm]{Lemma}
\newtheorem{corollary}[thm]{Corollary}
\newtheorem{remark}[thm]{Remark}
\newcommand\blfootnote[1]{%
  \begingroup
  \renewcommand\thefootnote{}%
  \footnotetext{#1}%
  \endgroup
}
\title{Quasi-optimal quantum Markov chain spectral gap  estimation}
\author[1]{Adam Connolly}
\author[1,2]{Steven Herbert}
\author[3]{Julien Sorci}
\affil[1]{Quantinuum, Cambridge, UK}
\affil[2]{Department of Computer Science and Technology, University of Cambridge, UK}
\affil[3]{Quantinuum, Arlington, VA, USA}
\date{\today} 
\begin{document}

\maketitle

\begin{abstract}
This paper proposes a quantum algorithm for Markov chain spectral gap estimation that is quasi-optimal (i.e., optimal up to a polylogarithmic factor) in the number of vertices for all parameters, and additionally quasi-optimal in the reciprocal of the spectral gap itself, if the permitted relative error is above some critical value. In particular, these results constitute an almost quadratic advantage over the best-possible classical algorithm. Our algorithm also improves on the quantum state of the art, and we contend that this is not just theoretically interesting but also potentially practically impactful in real-world applications: knowing a Markov chain's spectral gap can speed-up sampling in Markov chain Monte Carlo. 

Our approach uses the quantum singular value transformation, and as a result we also develop some theory around block-encoding Markov chain transition matrices, which is potentially of independent interest. In particular, we introduce explicit block-encoding methods for the transition matrices of two algebraically-defined classes of Markov chains.
\end{abstract}

\section{Introduction}
\label{sec:intro}

A\blfootnote{Contact: \texttt{Firstname.Lastname AT Quantinuum.com}. Note that author order is alphabetical.} modern and exciting approach to quantum algorithm design is to view quantum computers as machines that manipulate the singular values and eigenvalues of large matrices. This is the essence of the celebrated \textit{quantum singular value transformation} (QSVT) \cite{Gily2019}, lauded as the \textit{grand unification of quantum algorithms} \cite{Grand-Unification}. As a result, the search for applications which reduce to singular value and eigenvalue transformations has taken centre-stage in contemporary quantum algorithm research and development. It follows that Markov chain mixing would ostensibly appear to be an excellent candidate for quantum speed-up by the QSVT, as the process of mixing is nothing more than the suppression of all but the largest eigenvalue of the Markov chain's \textit{transition matrix}.

The reality is, however, a little more complicated. The goal of mixing is to obtain the \textit{stationary distribution} and Markov chain Monte Carlo (MCMC) algorithms (which are used in an abundance of real-world applications, such as image restoration \cite{MCMC1}, genetics and bio-informatics \cite{MCMC2, MCMC3}, ecology \cite{MCMC4, MCMC5}, statistical physics \cite{MCMC6}, econometrics \cite{MCMC7} and of course machine learning \cite{MCMC8, MCMC9}, amongst many more) leverage this property when the distribution in question is not easy to sample by other means. The critical problem is that the QSVT only naturally applies to Markov chains whose transition matrices are normal -- a condition that implies that the stationary distribution is uniform (see Lemma~\ref{lem:normal-is-uniform}, below). As the uniform distribution is not hard to obtain by other means, this renders the whole exercise somewhat pointless.

In this paper we contend that quantum computing offers an important advantage in MCMC not by speeding up mixing itself, but rather by speeding-up the \textit{estimation} of the mixing time, that is, the number of Markov chains steps required to obtain the stationary distribution. This is especially well-motivated in the case of reversible Markov chains (defined below in Def.~\ref{def:reversible-MC}), as are the Markov chains deployed in the ubiquitous \textit{Metropolis-Hastings} MCMC algorithm \cite{Metropolis1953, Hastings1970}, as the estimation of the mixing time is a much harder problem than mixing itself (see Section~\ref{subsec:estimate-relax-time} for details). In particular, even though the mixing cannot directly be fast-forwarded quantumly (except when the stationary distribution is uniform), using a quantum algorithm to more accurately estimate a lower bound on the mixing time can reduce the overall cost of sampling by reducing the number of Markov chain steps needed for each sample. Our work also encompasses the case where the stationary distribution \textit{is} uniform. In such cases, one may still be interested in how long it takes to obtain the stationary distribution, a well-known example being the work of Bayer and Diaconis showing that a 52 card pack requires only seven riffle shuffles to be well-mixed \cite{BayerDiaconis1992}. Diaconis later wrote an excellent survey of the field in general \cite{Diaconis2005}.

In this paper we propose a quantum algorithm that estimates the second largest singular value of a block-encoded matrix, and hence the \textit{singular gap} (the difference between the largest two singular values), when the largest singular value is known (in our application it is known to be one). This can be used to estimate the spectral gap of reversible Markov chains by taking advantage of a known method of block-encoding the Markov chain's symmetrised discriminant \cite{Szegedy}, and using the fact that the singular gap of the symmetrised discriminant coincides with the spectral gap of the transition matrix for reversible Markov chains. The reciprocal of the spectral gap, known as the Markov chain's \textit{relaxation time}, is proportional to widely-used upper- and lower-bounds on the mixing time (see Fact~\ref{fact1}, below), and so is a useful proxy for the mixing time. As well as reversible Markov chains, our bounds also apply to Markov chains whose stationary distribution is uniform, as in this case the reciprocal of the singular gap of the transition matrix is a proxy for the mixing time. We show that, in principle, unscaled block-encoding of transition matrices of doubly-stochastic Markov chains is always possible, and give some explicit methods for certain algebraically-defined Markov chains. Our quantum algorithm has two variants, the first quasi-optimal in the Markov chain size (number of vertices) and the second quasi-optimal in the reciprocal of the singular gap itself.

\subsection*{Main contributions}

\begin{enumerate}[(i)]
    \item \textbf{Block-encoding transition matrices.} We give a simple result about when a transition matrix can be block-encoded without scaling. We also propose explicit block-encoding methods for transition matrices of certain algebraically-defined Markov chains.
    \item \textbf{Design of a new QSVT polynomial for this application.} Our proposed algorithm involves applying a filter to the singular values of a block-encoded transition matrix. This can be done quasi-optimally in the Markov chain size (number of vertices) using the polynomial approximation of the sign function as also used in the original QSVT paper for singular value / eigenvalue filtering \cite[Lemma 14]{Gily2019}, however quasi-optimality in the reciprocal of the spectral gap required us to design a new singular value filter by tuning the Dolph-Chebyshev window \cite{Dolph1946} accordingly.
    \item \textbf{Constructing a suitable ensemble of starting states.} Any technique of the form we propose for this purpose relies on a certain uniformity in the overlap of the initial states and the eigenvectors of the Markov chain. We explicitly show that unitary 2-designs suffice for this purpose.
    \item \textbf{Proof of quasi-optimality}.
\end{enumerate}

Note that ``quasi-optimality'' is a widely-used term, but it is worth defining:

\begin{defn}
    An algorithm is called \textbf{quasi-optimal} if it is only worse than the optimal algorithm for the problem by only a poly-logarithmic factor.
\end{defn}

\section{Preliminaries on Markov chains}
\label{sec:prelims}

Markov chains are characterised by the ``Markov'' property of being \textit{memoryless}: their future evolution depends only on the present state, and not the entire history of how that state was arrived at. In this article we consider only Markov chains that have discrete state-space, and evolve (make transitions) by taking ``steps'' at discrete time intervals. As such, the behaviour of a Markov chain on some state-space $\Omega$ is captured by an $N \times N$ real matrix (where $N = |\Omega|$), known as its \textit{transition matrix}, $P$, which is such that:
\begin{equation}
    \boldsymbol{\mu}(t+1) = \boldsymbol{\mu}(t) P
\end{equation}
where $\boldsymbol{\mu}(t)$ is an $N$-element non-negative real vector such that $\sum_{i=1}^N \mu_i(t) = 1$. Thus $\boldsymbol{\mu}(t)$ is a probability distribution on $\Omega$, parameterised by the time, $t$\footnote{In this paper we adopt the convention that the probability distribution left-multiplies the transition matrices, however in some of the literature right-multiplication is used.}. 

The transition matrix maps a probability distribution to another probability distribution, and so has the property:
\begin{equation}
\label{eq:stoch-def}
    \boldsymbol{1} = P \boldsymbol{1}
\end{equation}
and such matrices are known as \textit{row-stochastic}. Moreover, matrices where $\boldsymbol{1} = P^T \boldsymbol{1}$ also holds are known as \textit{doubly-stochastic}.

The eigenvalues of $P$ all have absolute value at most equal to one, and every Markov chain has at least one eigenvalue equal to $+1$ (a trivial implication of (\ref{eq:stoch-def})). Vectors in the $+1$-left-eigenspace are known as stationary distributions. Following practical and theoretical precedent, in this paper we mostly restrict our attention to \textit{ergodic} Markov chains, which are such that every state is reachable with non-zero probability from every other state within a finite number of steps. Ergodic Markov chains have a unique stationary distribution, denoted $\boldsymbol{\pi}$, and so,
\begin{equation}
    \boldsymbol{\pi} = \boldsymbol{\pi}P
\end{equation}

As well as a transition matrix, every Markov chain has a discriminant, $D'$, which is an $N \times N$ matrix such that:
\begin{equation}
    D'_{i,j} = \frac{\sqrt{\pi_i} P_{i,j}}{\sqrt{\pi_j}}
\end{equation}

Furthermore every Markov chain has a \textit{symmetrised} discriminant, $D$, which is an $N \times N$ matrix such that:
\begin{equation}
    D_{i,j} = \sqrt{P_{i,j} P_{j,i}}
\end{equation}
As $D$ is real and symmetric (therefore Hermitian), its spectrum is real.

\subsection{Some notable classes of Markov chains}

\begin{defn}
\label{def:reversible-MC}
A Markov chain is \textbf{reversible} if
\begin{equation}
    \pi_i P_{i,j} = \pi_j P_{j,i}
\end{equation}
for all $i,j \in \Omega$. These equations are called the \textit{detailed balance equations}.
\end{defn}

\begin{lemma}
\label{lem:sym-eq-disc}
    For reversible Markov chains, $D = D'$.
\end{lemma}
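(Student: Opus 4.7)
The proof plan is a direct entry-wise calculation exploiting the detailed balance equations. Fix arbitrary $i,j \in \Omega$; my goal is to show that the $(i,j)$-entry of $D'$ equals that of $D$, i.e.\ that
\[
\frac{\sqrt{\pi_i}\,P_{i,j}}{\sqrt{\pi_j}} \;=\; \sqrt{P_{i,j}\,P_{j,i}}.
\]
First I would rewrite the left-hand side by multiplying inside a single square root: since $P_{i,j} \ge 0$, I have $\sqrt{\pi_i}\,P_{i,j} = \sqrt{\pi_i\,P_{i,j}^2}$, and so the left-hand side equals $\sqrt{\pi_i\,P_{i,j}^2/\pi_j}$.

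Next I would apply the detailed balance equation $\pi_i P_{i,j} = \pi_j P_{j,i}$ from Def.~\ref{def:reversible-MC} to replace one factor of $\pi_i P_{i,j}$ in the numerator, obtaining $\sqrt{P_{i,j}\cdot \pi_j P_{j,i}/\pi_j} = \sqrt{P_{i,j} P_{j,i}}$, which is exactly $D_{i,j}$. Since $i,j$ were arbitrary, $D=D'$.

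The only mild subtlety is the possibility of division by zero when $\pi_j=0$; however, since we have restricted attention to ergodic Markov chains, the stationary distribution has strictly positive entries and $D'$ is well-defined. (If one wanted a cleaner avoidance of this issue, one could instead verify $\sqrt{\pi_j}\,D'_{i,j} = \sqrt{\pi_j}\,D_{i,j}$ and cancel $\sqrt{\pi_j}$ only where it is nonzero, but this is not strictly necessary.) There is no real obstacle: the lemma is a one-line algebraic consequence of detailed balance combined with the non-negativity of the transition probabilities.
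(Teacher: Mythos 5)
Your proof is correct and follows essentially the same approach as the paper's: a direct entry-wise computation bringing everything under one square root and then invoking detailed balance. The only (cosmetic) difference is that you substitute $\pi_i P_{i,j} = \pi_j P_{j,i}$ directly rather than using the rearranged form $\pi_i/\pi_j = P_{j,i}/P_{i,j}$, which incidentally sidesteps the degenerate-looking expression $\sqrt{(P_{i,j})^2\,P_{j,i}/P_{i,j}}$ the paper writes when $P_{i,j}=0$; your remark about $\pi_j>0$ under ergodicity is a correct and welcome bit of care.
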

\begin{proof}
    \begin{equation}
        D'_{i,j} = \frac{\sqrt{\pi_i} P_{i,j}}{\sqrt{\pi_j}} = \sqrt{ \frac{\pi_i (P_{i,j})^2}{\pi_j}} = \sqrt{  (P_{i,j})^2 \frac{P_{j,i}}{P_{i,j}}}  = \sqrt{P_{i,j} P_{j,i}} = D_{i,j}
    \end{equation}
    where $\frac{\pi_i}{\pi_j} = \frac{P_{j,i}}{P_{i,j}}$ is a rearrangement of the detailed balance equation.
\end{proof}

Reversible Markov chains have the property that for each, the spectrum of its transition matrix coincides with that of its discriminant \cite{MagniezNayakRolandSantha2011} and therefore that of its symetrised discriminant (a trivial implication of Lemma~\ref{lem:sym-eq-disc}). As $D$ is symmetric, its singular values are equal in magnitude to its eigenvalues, and so we can conclude that for any reversible Markov chain, the singular values of $D$ are equal to the magnitudes of the eigenvalues of $P$. Moreover, as $D$ is symmetric, its spectrum -- and therefore that of $P$ -- is real.

\begin{defn} \textbf{Doubly-stochastic Markov chains} are such that $P$ is doubly-stochastic, i.e., $\boldsymbol{1} = P \boldsymbol{1}$ and $\boldsymbol{1} = P^T \boldsymbol{1}$ both hold, as above.
\end{defn}

\begin{remark}
    The stationary distribution of an ergodic Markov chain is uniform if and only if its transition matrix is doubly stochastic -- a fact that is easy to verify from the definition.
\end{remark}

\begin{defn}
A Markov chain is \textbf{symmetric} if its transition matrix, $P$, is symmetric.
\end{defn}

\begin{remark}
   For symmetric Markov chains, $P=D$.
\end{remark}

\begin{lemma}
\label{lem:intersection-lem}
    A Markov chain is reversible and doubly-stochastic if and only if it is symmetric.
\end{lemma}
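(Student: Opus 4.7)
The plan is to prove both directions by exploiting the fact that doubly-stochastic Markov chains admit the uniform distribution as a stationary distribution, which collapses the detailed balance equations into the symmetry condition on $P$.

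For the forward direction, assume $P$ is reversible and doubly-stochastic. By the remark following the definition of doubly-stochastic Markov chains, the (unique) stationary distribution is uniform, so $\pi_i = 1/N$ for all $i \in \Omega$. Substituting this into the detailed balance equations $\pi_i P_{i,j} = \pi_j P_{j,i}$ immediately gives $P_{i,j} = P_{j,i}$ for all $i,j$, so $P$ is symmetric.

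For the reverse direction, assume $P$ is symmetric. Then row-stochasticity $\sum_j P_{i,j} = 1$ combined with $P_{i,j}=P_{j,i}$ yields $\sum_j P_{j,i} = 1$, i.e., $P^T \boldsymbol{1} = \boldsymbol{1}$, so $P$ is doubly-stochastic. The uniform vector $\pi_i = 1/N$ is therefore stationary, and with this $\pi$ the detailed balance equations reduce to $P_{i,j} = P_{j,i}$, which holds by assumption, so $P$ is reversible.

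There is no real obstacle here; the only point that deserves a moment of care is the appeal to the remark, which is phrased under the ergodicity convention of the section. If one wishes to avoid this, the same argument goes through by directly checking that $(\pi P)_j = (1/N) \sum_i P_{i,j} = 1/N = \pi_j$ using double-stochasticity, so the uniform distribution is stationary regardless of ergodicity, and the detailed balance equations are interpreted with respect to this $\pi$.
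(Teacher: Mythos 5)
Your proof is correct and takes essentially the same route as the paper: both directions use the fact that double-stochasticity forces the uniform stationary distribution, which collapses the detailed balance equations into the symmetry condition $P_{i,j}=P_{j,i}$. Your closing remark about sidestepping ergodicity by verifying directly that the uniform distribution is stationary is a small but accurate clarification that the paper leaves implicit.
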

\begin{proof}
    If a Markov chain is both reversible and doubly-stochastic then it is symmetric: the stationary distribution of doubly-stochastic Markov chains is always uniform, so $\forall i, j, \, \pi_i = \pi_j$; substituting this into the detailed balance equation $\forall i, j, \, P_{i,j} = P_{j,i}$ and so the transition matrix is symmetric. 
    
    Conversely, if a Markov chain with transition matrix $P$ is symmetric, then $P = P^T$ and so it is doubly stochastic, and hence its stationary distribution is uniform. Substituting $\forall i, j, \, \pi_i = \pi_j$ (by double-stochasticity) and $\forall i, j, \, P_{i,j} = P_{j,i}$ (by symmetry) the detailed balance equation certainly holds, and so the Markov chain is reversible.
\end{proof}

\begin{defn} \textbf{Normal Markov chains} are such that $P$ is a normal matrix. That is, $P P^T = P^T P$, as $P$ is always real.
\end{defn}

\begin{remark}
    Every symmetric transition matrix is normal (trivially), however there are normal transition matrices that are not symmetric. For example, consider asymmetric permutation matrices which are: stochastic and unitary, hence normal.
\end{remark}

\begin{lemma}
\label{lem:normal-is-uniform}
    Every ergodic normal Markov chain is doubly-stochastic and hence has uniform stationary distribution.
\end{lemma}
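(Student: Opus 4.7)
The plan is to leverage the well-known fact that for a normal operator, the right and left eigenspaces for each eigenvalue coincide. Since $P$ is a transition matrix, it is row-stochastic and so already has $\mathbf{1}$ as a right eigenvector with eigenvalue $1$, i.e., $P\mathbf{1} = \mathbf{1}$. Double-stochasticity is the additional requirement $P^T\mathbf{1} = \mathbf{1}$, so the entire task reduces to pushing $\mathbf{1}$ from being a right eigenvector of $P$ to also being a right eigenvector of $P^T$ with eigenvalue $1$.

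First, I would record the standard lemma that for any normal matrix $M$ (over $\mathbb{R}$ or $\mathbb{C}$) one has $\ker M = \ker M^*$. This follows from the one-line identity $\|Mv\|^2 = \langle v, M^*Mv\rangle = \langle v, MM^*v\rangle = \|M^*v\|^2$, which shows $Mv = 0 \iff M^*v = 0$. Next I would apply this to $M = P - I$: since $P$ is normal and $I$ commutes with everything, $P - I$ is also normal, and its adjoint is $P^T - I$ (as $P$ is real). Hence $(P-I)\mathbf{1} = 0$ forces $(P^T - I)\mathbf{1} = 0$, which is exactly $P^T\mathbf{1} = \mathbf{1}$, proving that $P$ is doubly-stochastic.

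Finally, I would close the argument by appealing to the remark already stated in the excerpt: an ergodic Markov chain is doubly-stochastic if and only if its stationary distribution is uniform. Since the chain is assumed ergodic, this gives the desired conclusion.

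I do not anticipate a serious obstacle here: the only slightly non-trivial ingredient is the $\ker M = \ker M^*$ property for normal matrices, which is standard linear algebra and essentially a single calculation. Ergodicity plays no role in the double-stochasticity step itself and only enters at the very end, to invoke uniqueness of the stationary distribution via the cited remark.
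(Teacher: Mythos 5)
Your proof is correct and takes a genuinely different route from the paper's. The paper computes $P P^T \boldsymbol{1} = P^T P \boldsymbol{1} = P^T \boldsymbol{1}$, observes that $P^T \boldsymbol{1}$ therefore lies in the $+1$ right-eigenspace of $P$, and then invokes ergodicity (via Perron--Frobenius uniqueness of that eigenspace) to conclude $P^T \boldsymbol{1} = \boldsymbol{1}$. You instead apply the standard identity $\ker M = \ker M^*$ for normal $M$ to $M = P - I$ (normal because $P$ is, with $M^* = P^T - I$ since $P$ is real), getting $(P^T - I)\boldsymbol{1} = 0$ directly from $(P - I)\boldsymbol{1} = 0$. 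Your route has the advantage of isolating exactly where ergodicity enters: it shows that \emph{any} normal stochastic matrix is doubly-stochastic, with ergodicity needed only for the ``hence uniform'' clause via the cited remark. The paper's route avoids the $\ker M = \ker M^*$ lemma at the cost of leaning on Perron--Frobenius; strictly speaking it only yields $P^T \boldsymbol{1} = c\boldsymbol{1}$ for some scalar $c$, and one more half-line (e.g.\ pairing both sides with $\boldsymbol{1}$) is needed to pin down $c = 1$ --- a gap your argument sidesteps entirely. Both proofs are short and both are valid.
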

\begin{proof}
    Let $P$ be the transition matrix of an ergodic Markov chain, i.e., we have $P \boldsymbol{1} = \boldsymbol{1}$ by row-stochasticity and $P P^T = P^T P$ by normality, hence:
    \begin{equation}
        P P^T \boldsymbol{1} = P^T P \boldsymbol{1} = P^T \boldsymbol{1}
    \end{equation}
    and so $P^T \boldsymbol{1}$ is a $+1$ right-eigenvector of $P$. However, for an ergodic Markov chain there is a unique $+1$ right-eigenvector, namely $\boldsymbol{1}$, and hence $P^T \boldsymbol{1} = \boldsymbol{1}$, thus $P^T$ is row-stochastic and so $P$ is doubly-stochastic.
\end{proof}

Normal matrices are unitarily diagonalisable, and so we get the standard property:

\begin{fact}
    Normal Markov chains are such that the magnitudes of their eigenvalues coincide with their singular values.
\end{fact}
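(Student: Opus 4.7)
The plan is to invoke the spectral theorem for normal matrices and compute the singular values directly from it. Since $P$ is normal, there exists a unitary $U$ and a diagonal matrix $\Lambda = \mathrm{diag}(\lambda_1,\dots,\lambda_N)$ whose diagonal entries are the (possibly complex) eigenvalues of $P$ such that $P = U \Lambda U^*$. Substituting this into $P^* P$ gives $P^* P = U \Lambda^* U^* U \Lambda U^* = U \, \mathrm{diag}(|\lambda_1|^2,\dots,|\lambda_N|^2) \, U^*$, which is itself a unitary diagonalisation. Hence the eigenvalues of $P^* P$ are precisely $|\lambda_i|^2$.

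Now recalling that the singular values of $P$ are by definition the non-negative square roots of the eigenvalues of $P^* P$, I conclude that the singular values are $|\lambda_i|$, i.e., the magnitudes of the eigenvalues of $P$, as claimed.

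The only point needing care is that although $P$ is real and normality is stated in the real form $P P^T = P^T P$, the eigenvalues of $P$ may be complex, so the diagonalising $U$ is in general a complex unitary rather than a real orthogonal matrix. This poses no problem, because for real $P$ one has $P^* = P^T$, so real normality coincides with complex normality and the complex spectral theorem applies. There is no genuine obstacle here; the result is essentially immediate once unitary diagonalisability is in hand, which is why the authors label it a \textbf{fact} rather than a lemma.
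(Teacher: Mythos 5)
Your proof is correct and follows exactly the route the paper implies: the Fact is stated immediately after the remark that normal matrices are unitarily diagonalisable, and your computation of $P^*P = U\,\mathrm{diag}(|\lambda_i|^2)\,U^*$ simply spells out the standard consequence the authors leave implicit. The clarification that real normality ($PP^T = P^TP$) coincides with complex normality since $P^* = P^T$ for real $P$ is a sensible bit of care, but does not change the argument.
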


\noindent Figure~\ref{fig:Matrix-Venn} shows a simple Venn diagram of the relationship between these classes of Markov chains. 

\begin{figure} [t!]
\centering
\includegraphics[ width = 0.75\linewidth]{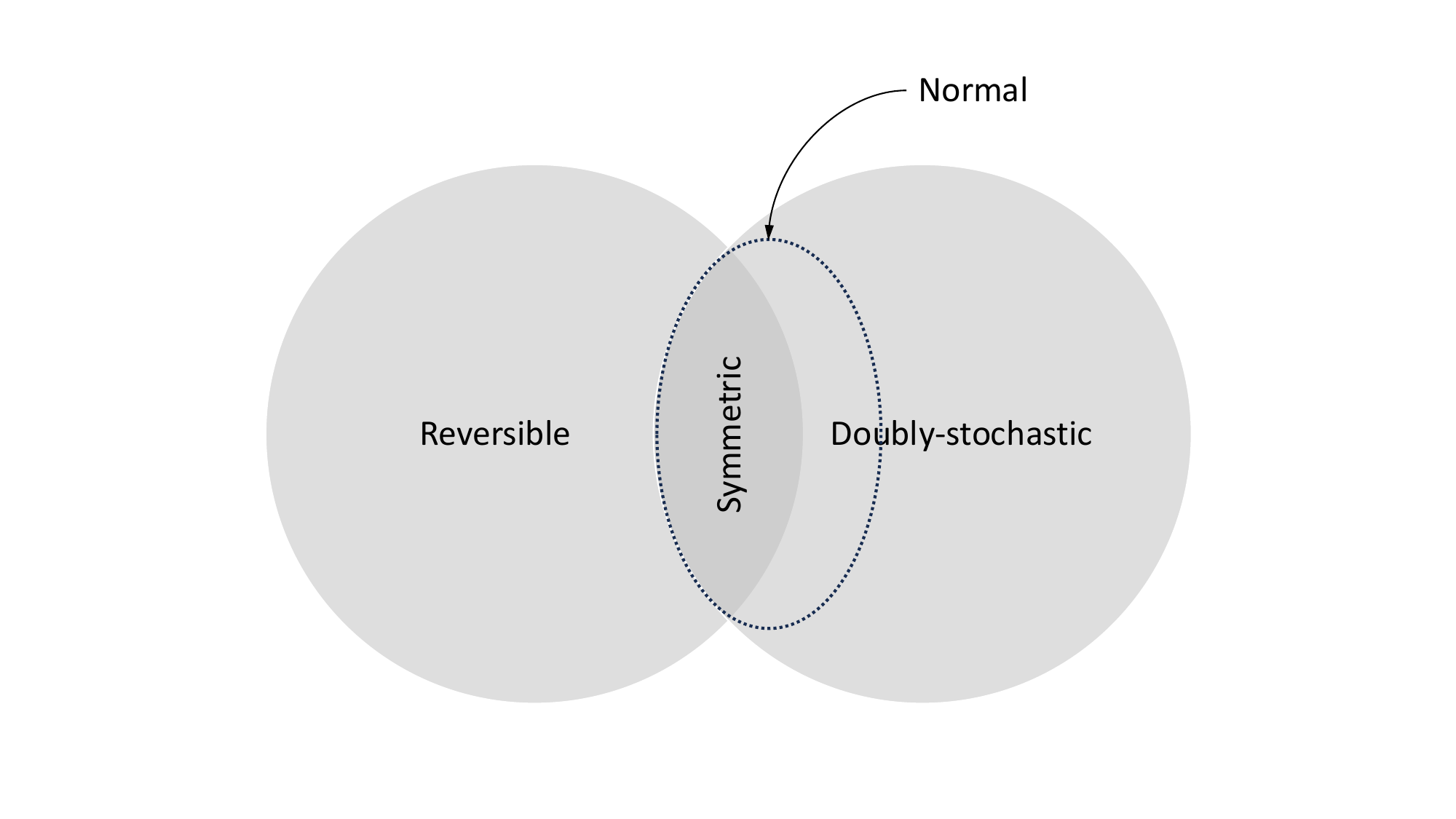}
  \captionsetup{width=.9\linewidth}
  \caption{Relationship between some important ergodic Markov chain classes.}
 \label{fig:Matrix-Venn}
\end{figure}

\subsection{Markov chain spectral gap and mixing time}

The mixing time of a Markov chain is the number of steps required to obtain the stationary distribution to some defined degree of approximation. There are various ways to measure closeness of probability distributions, which can then be used to define the mixing time, a widely-used measure for this purpose is the total variation distance, $||.||_{\text{TV}}$, which then gives 
\begin{align}
    \tau (\epsilon) & = \min_n \{ \forall x \in \Omega, \, , || P^n (x,.) - \boldsymbol{\pi}||_{\text{TV}} \leq \epsilon \} \\
\end{align}

In this paper we estimate not the mixing time itself, but rather the \textit{spectral gap}.

\begin{defn}
    Let $P$ be the transition matrix of an ergodic Markov chain with real eigenvalues $1 = \lambda_1 > \lambda_2 \geq \lambda_3 \geq \dots \geq \lambda_N \geq -1.$ \textbf{The spectral gap} of $P$ is 
    \begin{equation}
        \gamma = 1 - \max_{i \geq 2} |\lambda_i|
    \end{equation}
\end{defn}

\begin{remark}
    The spectral gap is non-negative, and strictly positive unless $\lambda_N = -1$, which occurs if and only if the graph is bipartite \cite{KontoyiannisMeyn2009}.
\end{remark}

Note that the spectral gap is defined for transition matrices with real spectra, which includes the set of reversible and normal Markov chains, as explained above. The spectral gap can be used to bound the mixing time \cite{LevinPeresWilmer2009}:

\begin{fact}
\label{fact1}
    For reversible and normal Markov chains, the following bounds hold:
    \begin{equation}
        \frac{1}{2 \gamma} \log\Big( \frac{1}{2 \epsilon}\Big) \leq \tau (\epsilon) \leq \frac{1}{\gamma} \log\Big(\frac{1}{2 \epsilon \sqrt{\pi^*}}\Big)
    \end{equation}
    where $\pi^*$ is the minimum entry of $\boldsymbol{\pi}$.
\end{fact}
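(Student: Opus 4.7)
The plan is to split Fact~\ref{fact1} into its two inequalities and handle each by a standard spectral argument. Throughout I would exploit the discussion already in the excerpt: for reversible and normal Markov chains, $P$ has the same spectrum as the symmetrised discriminant $D$, which is Hermitian, so there is an orthonormal eigenbasis in which $P$ acts diagonally with real eigenvalues $1=\lambda_1>\lambda_2\ge\cdots\ge\lambda_N\ge -1$. In the reversible case the conjugation is by $\Pi^{1/2}=\mathrm{diag}(\sqrt{\pi_i})$, and in the normal case $\pi$ is uniform (Lemma~\ref{lem:normal-is-uniform}) so $P$ is itself unitarily diagonalisable.

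For the upper bound I would first pass from total variation to $\chi^2$-distance by Cauchy--Schwarz,
\begin{equation}
4\,\|P^n(x,\cdot)-\boldsymbol{\pi}\|_{\text{TV}}^2 \;\le\; \sum_{y}\frac{(P^n(x,y)-\pi_y)^2}{\pi_y},
\end{equation}
then expand $P^n(x,\cdot)/\pi(\cdot)-\mathbf{1}$ in the orthonormal basis of right-eigenvectors of $P$ (weighted by $\pi$). The trivial eigenmode drops out, and every surviving mode carries a factor $\lambda_i^{2n}$ with $|\lambda_i|\le 1-\gamma$, while the remaining Parseval sum is bounded by $1/\pi_x\le 1/\pi^\ast$. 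This gives $\|P^n(x,\cdot)-\boldsymbol{\pi}\|_{\text{TV}}\le \frac{1}{2\sqrt{\pi^\ast}}(1-\gamma)^n$; demanding this be at most $\epsilon$ and using $-\log(1-\gamma)\ge\gamma$ yields $\tau(\epsilon)\le \frac{1}{\gamma}\log\!\bigl(\tfrac{1}{2\epsilon\sqrt{\pi^\ast}}\bigr)$.

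For the lower bound I would use the dual formulation $\|\mu-\nu\|_{\text{TV}}=\tfrac12\sup_{\|g\|_\infty\le 1}|\mathbb{E}_\mu g-\mathbb{E}_\nu g|$. Let $f$ be a right-eigenvector of $P$ for an eigenvalue $\lambda$ with $|\lambda|=1-\gamma$, normalised so that $\|f\|_\infty=1$; since $f$ is orthogonal to the stationary eigenvector, $\mathbb{E}_{\boldsymbol{\pi}}[f]=0$. Using $f$ as the test function at a state $x^\ast$ attaining $\|f\|_\infty$ gives $\|P^n(x^\ast,\cdot)-\boldsymbol{\pi}\|_{\text{TV}}\ge \tfrac12|(P^n f)(x^\ast)|=\tfrac12(1-\gamma)^n$. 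Combining with $-\log(1-\gamma)\le 2\gamma$ on $(0,1)$ then produces $\tau(\epsilon)\ge \frac{1}{2\gamma}\log\!\bigl(\tfrac{1}{2\epsilon}\bigr)$.

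The only real subtlety, and thus the main obstacle, is the spectral step in the upper bound when $\lambda_N=-\gamma$ might equal $-(1-\gamma)$ for $\gamma<1/2$: one must be careful that $\gamma$ really governs the decay of every non-trivial mode and not merely $\lambda_2$. The definition $\gamma=1-\max_{i\ge 2}|\lambda_i|$ is precisely what makes this work, so the argument goes through; beyond that, the proof is essentially bookkeeping around the elementary inequalities $\gamma\le -\log(1-\gamma)\le 2\gamma$ used to convert geometric decay rates into the quoted $1/\gamma$ and $1/(2\gamma)$ prefactors.
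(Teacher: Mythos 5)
The paper does not prove this Fact; it is stated as a known result and cited to Levin, Peres and Wilmer \cite{LevinPeresWilmer2009}, so there is no internal proof to compare against. Your blind argument is precisely the standard proof in that reference (Theorems 12.4 and 12.5): the $\chi^2$/$\ell^2(\boldsymbol{\pi})$ spectral decomposition plus Parseval for the upper bound, and an eigenvector test function with $\|f\|_\infty=1$, $\mathbb{E}_{\boldsymbol{\pi}} f=0$ for the lower bound. Both steps are sound, and the upper-bound bookkeeping ($4\|\cdot\|_{\text{TV}}^2\le\chi^2$, Parseval giving $1/\pi_x$, then $-\log(1-\gamma)\ge\gamma$) recovers the paper's $\frac{1}{\gamma}\log\bigl(\tfrac{1}{2\epsilon\sqrt{\pi^*}}\bigr)$ exactly.

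One inaccuracy worth fixing on the lower bound: you assert $-\log(1-\gamma)\le 2\gamma$ holds on all of $(0,1)$, but this is false for $\gamma$ near $1$ (e.g.\ at $\gamma=0.9$, $-\log(0.1)\approx 2.30>1.8$; the inequality fails for $\gamma\gtrsim 0.8$). What the eigenvector argument actually gives is $\tau(\epsilon)\ge\log\bigl(\tfrac{1}{2\epsilon}\bigr)/(-\log(1-\gamma))$, and the clean elementary step is $-\log(1-\gamma)\le\tfrac{\gamma}{1-\gamma}$ (from $\log(1+u)\le u$), yielding the LPW form $\tau(\epsilon)\ge\bigl(\tfrac{1}{\gamma}-1\bigr)\log\bigl(\tfrac{1}{2\epsilon}\bigr)$. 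That implies the paper's $\tfrac{1}{2\gamma}\log\bigl(\tfrac{1}{2\epsilon}\bigr)$ only when $\gamma\le 1/2$, so the paper's statement is itself a slight loosening valid in the slowly-mixing regime of interest; you should either make that restriction explicit or keep the tighter $\tfrac{1}{\gamma}-1$ prefactor rather than claim the $2\gamma$ bound globally.
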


As $\frac{1}{\gamma}$ is proportional to both the lower- and upper-bounds, it may be taken directly as a proxy for the mixing time and, as such, is given a name:

\begin{defn}
\label{def:relax-time}
    The \textbf{relaxation time} of a Markov chain with spectral gap $\gamma$ is $\tau_{\textnormal{rel}} = \frac{1}{\gamma}$.
\end{defn}

We are also concerned with cases where the matrix is not normal or reversible, but is still doubly-stochastic. In this case, we define:
\begin{defn}
    Let $P$ be the transition matrix of a doubly-stochastic ergodic Markov chain with singular values $1 = \sigma_1 > \sigma_2 \geq \sigma_3 \geq \dots \geq \sigma_N \geq 0.$ \textbf{The singular gap} of $P$ is 
    \begin{equation}
        \gamma_s = 1 - \sigma_2
    \end{equation}
\end{defn}
\noindent We have similar bounds on the mixing time in terms of the singular gap \cite{LevinPeresWilmer2009, SaloffCosteZuniga2009, Ganapathy2007}:
\begin{fact}
    For doubly-stochastic Markov chains, the following bounds hold:
    \begin{equation}
        \frac{1}{\gamma_s} \log\Big( \frac{1}{\epsilon}\Big) \leq \tau (\epsilon) \leq \frac{1}{\gamma_s} \log\Big(\frac{N}{\epsilon}\Big)
    \end{equation}
\end{fact}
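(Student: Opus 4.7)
The plan is to connect the total-variation mixing time to the spectral norm of $P - \Pi$, where $\Pi = \boldsymbol{1}\boldsymbol{1}^T/N$ is the rank-one projector onto the uniform stationary distribution. Because $P$ is doubly-stochastic, $\boldsymbol{1}/\sqrt{N}$ is simultaneously the top left and top right singular vector of $P$, with $\sigma_1 = 1$; peeling off this top singular triple leaves a matrix whose spectral norm is exactly $\sigma_2 = 1 - \gamma_s$.

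For the upper bound, I would first verify the algebraic identity $(P - \Pi)^t = P^t - \Pi$ for all $t \geq 1$. This follows by induction from $P\Pi = \Pi$ (row-stochasticity), $\Pi P = \Pi$ (column-stochasticity), and $\Pi^2 = \Pi$. Submultiplicativity of the spectral norm then gives $\|P^t - \Pi\|_2 \leq (1 - \gamma_s)^t \leq e^{-\gamma_s t}$. Applying this to the indicator $e_x$ bounds $\|P^t(x,\cdot) - \boldsymbol{\pi}\|_2$ by $(1-\gamma_s)^t$, and Cauchy--Schwarz converts $\ell_2$ to $\ell_1$ at the cost of a factor $\sqrt{N}$. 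Setting $\tfrac{\sqrt{N}}{2}(1-\gamma_s)^t \leq \epsilon$ and solving for $t$ yields $\tau(\epsilon) \leq \frac{1}{\gamma_s}\log(N/\epsilon)$ after absorbing constants into the logarithm.

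For the lower bound, I would use the dual variational form of total variation, $\|P^t(x,\cdot) - \boldsymbol{\pi}\|_{\text{TV}} = \tfrac{1}{2}\sup_{\|\phi\|_\infty \leq 1}|E_{P^t(x,\cdot)}[\phi] - E_{\boldsymbol{\pi}}[\phi]|$. Restricting to test functions $\phi \perp \boldsymbol{1}$ and invoking $(P - \Pi)^t = P^t - \Pi$ again gives $E_{P^t(x,\cdot)}[\phi] = ((P - \Pi)^t\phi)(x)$. Following the route taken in the cited references, one takes $\phi$ proportional to a singular vector of $P$ associated to $\sigma_2$, normalised in sup-norm, and then maximises over starting states $x$ in order to pass from an $\ell_2$ estimate on $(P - \Pi)^t\phi$ to an $\ell_\infty$ estimate, yielding $\tau(\epsilon) \geq \frac{1}{\gamma_s}\log(1/\epsilon)$.

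The main obstacle is the lower bound, because $P$ is not assumed normal: singular vectors do not propagate cleanly through powers of $P$, and the eigenvalue magnitudes (which can be strictly below $\sigma_2$) are not a substitute for the singular values in the stated form of the bound. The critical algebraic lever is the identity $(P - \Pi)^t = P^t - \Pi$, which reduces the whole problem to controlling a single power of $P - \Pi$ and allows the doubly-stochastic case to be handled without the convenient unitary diagonalisation available in the normal case.
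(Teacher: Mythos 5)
The paper does not prove this statement; it is given as a cited fact (Refs.\ \cite{LevinPeresWilmer2009,SaloffCosteZuniga2009,Ganapathy2007}), so there is nothing to compare against except the statement itself. Your upper-bound argument is correct and standard: the identity $(P-\Pi)^t = P^t-\Pi$, the observation that $\boldsymbol{1}/\sqrt{N}$ is simultaneously the top left and right singular vector so that $\|P-\Pi\|_2 = \sigma_2 = 1-\gamma_s$, submultiplicativity, Cauchy--Schwarz to pass from $\ell_2$ to $\ell_1$, and solving for $t$ all go through for doubly-stochastic $P$.

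For the lower bound, the obstacle you flag in your final paragraph is not a deficiency of your proof strategy but a genuine error in the stated Fact: no completion of your argument (or any other) can succeed, because the claimed lower bound is false for general non-normal doubly-stochastic chains. Singular values of $P-\Pi$ control $\|(P-\Pi)^t\|$ only from above; from below the decay is governed by the spectral radius of $P-\Pi$, which for a non-normal matrix can be far smaller than $\sigma_2$. Concretely, take $N=3$ and
\begin{equation}
P = \begin{pmatrix} 1/2 & 1/2 & 0 \\ 1/6 & 1/6 & 2/3 \\ 1/3 & 1/3 & 1/3 \end{pmatrix},
\end{equation}
which is doubly-stochastic with $P-\Pi = c\,uv^{T}$ for $c = 1/\sqrt{3}$, $u = (1,-1,0)/\sqrt{2}$, $v = (1,1,-2)/\sqrt{6}$. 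Since $u \perp v$ and both are orthogonal to $\boldsymbol{1}$, one has $(P-\Pi)^2 = 0$, hence $P^2 = \Pi$ and $\tau(\epsilon) \leq 2$ for all $\epsilon > 0$; yet $\sigma_2(P) = 1/\sqrt{3}$ gives $\gamma_s = 1 - 1/\sqrt{3}$, so $\tfrac{1}{\gamma_s}\log(1/\epsilon) \to \infty$ as $\epsilon \to 0$, violating the claimed lower bound. The lower bound is recovered only when $P$ is normal (so that singular values coincide with eigenvalue magnitudes), which is the reversible/symmetric regime; in the general doubly-stochastic case only the upper bound in terms of $\gamma_s$ survives, and a correct lower bound would need to be phrased in terms of the spectral radius (or the eigenvalue gap) of $P$ rather than the singular gap.
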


\section{Overview of related work}

\subsection{Classical algorithms for mixing time and spectral gap estimation}
\label{subsec:estimate-relax-time}

Hsu \textit{et al} prove that the complexity of estimating the spectral gap of a reversible Markov chain is 
\begin{equation}
    \tilde{\Omega} \left( \frac{N}{\gamma} \right)
\end{equation}
with the same bound holding for estimation of the mixing time itself \cite{hsu2015mixing}. Wolfer and Kontorovich also study this question, coming up with the same bounds \cite{wolfer2022estimating}, and in a second paper provide a nice summary of the known results \cite[Table 1]{wolfer2023improved}. 

It is notable that the problem of estimating the mixing time of reversible Markov chains is as hard as mixing the chain itself in terms of the spectral gap, and much worse (linear rather than logarithmic) in terms of the size of the Markov chain. In the case of rapidly mixing Markov chains (i.e., $\tau \in \mathcal{O}(\text{poly} \log N)$), the overall complexity is therefore exponentially greater. This gives clear motivation for quantumly speeding up the estimation of the mixing time, even when the mixing itself cannot be. For example, in MCMC, the goal is to sample from the stationary distribution, and this in turn requires the chain to be mixed. Knowing how many steps this requires is critical to obtaining the stationary distribution as efficiently as possible.

\subsection{Block encoding Markov chain symmetrised discriminants}
\label{sec:encode-D}

\begin{defn}
\label{def:block-encode}
A (scaled) block encoding of a matrix $A$ is a unitary matrix $U_A$ such that $(\bra{x_1}\otimes I) U_A (\ket{x_2}\otimes I) = \frac{1}{s} A$, for some $\ket{x_1}$, $\ket{x_2}$ and real, positive constant $s$. When $s=1$ we call the block encoding unscaled, or say it is a block-encoding without scaling.
\end{defn}
\noindent In many examples $\ket{x_1} = \ket{x_2} = \ket{0}$ so that $U_A = \begin{bmatrix} \frac{1}{s} A & \cdot \\ \cdot & \cdot \end{bmatrix}$.

The core component of most of the existing work on quantising Markov chains is the Szegedy operator \cite{Szegedy}, which may be thought of as a block encoding of $D$. For reversible Markov chains, the spectrum of the transition matrix coincides with that of the symmetrised discriminant, and so this preserves some important information about the transition matrix itself. Most notably, Apers and Sarlette use the Szegedy operator to ``fast forward'' the dynamics of a Markov chain \cite{apers2019quantum}; whilst it is pointed out in the original QSVT paper how a block-encoding of $D$ can be leveraged to obtain hitting time estimates \cite{Gily2019}.

To see how $D$ may be block-encoded, for simplicity we shall assume that the Markov chain is over a graph with some $N = 2^n$ vertices; to block encode the symmetrised discriminant, first consider an operator, $U_P$ that operates on a $N^2$ element space (i.e., an $2n$ qubit operator) and encodes the transitions:
\begin{equation}
\label{eq:UP-def}
    U_P\ket{0} \ket{i} = \sum_{j=1}^N \sqrt{P_{i,j}} \ket{j} \ket{i}
\end{equation}
Note that this defines only part of the matrix, and the unspecified part can always be constructed such that $U_P$ is unitary.

Access to such an operator is essentially commensurate with classical access to look up (or compute) transition probabilities -- even when the Markov chains is too large to explicitly express the transition matrix. The $N^2$ element space can thus be thought of as a first register of $n$ qubits representing the vertex transitioned \textit{to} followed by a second register of $n$ qubits representing the vertex transitioned \textit{from}. A second operator is the SHIFT operator, denoted $\F$, which is the permutation matrix that swaps the two registers:
\begin{equation}
    \F \ket{j,i} = \ket{i,j} 
\end{equation}
A simple computation now shows that the operators $U_P$ and $\F$ are all that are required to block-encode the discriminant.
\begin{lemma}
\label{lem:UP-original}
$U_P^\dagger \, \F \, U_P$ block encodes $D$:
\end{lemma}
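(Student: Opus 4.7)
The plan is to verify the block-encoding property by direct computation of the matrix entries of the top-left block $(\bra{0}\otimes I)\, U_P^\dagger\,\F\,U_P\,(\ket{0}\otimes I)$, treating the first $n$-qubit register as the ancilla (initialised to $\ket{0}$) and the second as the data register. Concretely, I would pick arbitrary data indices $i,j$ and show $(\bra{0}\otimes\bra{i})\, U_P^\dagger\,\F\,U_P\,(\ket{0}\otimes\ket{j}) = D_{i,j} = \sqrt{P_{i,j}P_{j,i}}$.

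First I would apply $U_P$ to the right-hand state using the defining equation \eqref{eq:UP-def}, giving $U_P(\ket{0}\otimes\ket{j}) = \sum_k \sqrt{P_{j,k}}\,\ket{k}\ket{j}$. Then the SHIFT operator swaps the two registers, producing $\F U_P(\ket{0}\otimes\ket{j}) = \sum_k \sqrt{P_{j,k}}\,\ket{j}\ket{k}$, which is the key step: the swap sends the ``to'' index $k$ into the data register and the ``from'' index $j$ into what will now act as the ancilla register of the second application of $U_P$.

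Next I would dualise the left-hand side, writing $(\bra{0}\otimes\bra{i})\,U_P^\dagger = \bigl(U_P(\ket{0}\otimes\ket{i})\bigr)^\dagger = \sum_\ell \sqrt{P_{i,\ell}}\,\bra{\ell}\bra{i}$, again using \eqref{eq:UP-def} and the fact that $P_{i,\ell}$ is real and non-negative. Taking the inner product with $\sum_k \sqrt{P_{j,k}}\,\ket{j}\ket{k}$, the orthogonality of the computational basis forces $\ell = j$ in the first register and $k = i$ in the second, collapsing the double sum to the single term $\sqrt{P_{i,j}}\sqrt{P_{j,i}} = D_{i,j}$, as required.

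No step is genuinely hard, but the one place to be careful is bookkeeping: the conventions for which register is ancilla vs.\ data, and the fact that the second application of $U_P^\dagger$ effectively reads off $P_{i,\cdot}$ whereas the first application encoded $P_{j,\cdot}$, so that the role of $\F$ is precisely to bring both factors into the product $P_{i,j}P_{j,i}$ that defines the symmetrised discriminant. It should also be noted in passing that this shows the block encoding is unscaled ($s=1$), since no normalisation constant appears in the computation.
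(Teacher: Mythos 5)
Your proof is correct and follows essentially the same route as the paper's: direct evaluation of the matrix entry $(\bra{0}\bra{i})\,U_P^\dagger\,\F\,U_P\,(\ket{0}\ket{j})$ by expanding both $U_P\ket{0}\ket{j}$ and $(\bra{0}\bra{i})U_P^\dagger$ via the defining relation and collapsing the double sum with orthogonality of the computational basis. The only cosmetic difference is that you explicitly apply $\F$ to the ket before taking the inner product, whereas the paper leaves $\F$ in place and reads off $\braket{k,i|j,l}$ directly; the bookkeeping and the resulting $\sqrt{P_{i,j}P_{j,i}}$ are identical.
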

\begin{proof}
\begin{align}
    \bra{0} \bra{i} (U_P^\dagger \, \F \, U_P) \ket{0} \ket{j} & = \left( \sum_k \sqrt{P_{i,k}} \bra{k,i} \right) \F \left( \sum_{l} \sqrt{P_{j,l}} \ket{l,j} \right) \nonumber \\
    & = \sum_{k,l} \sqrt{P_{i,k}P_{j,l}} \braket{k,i | j,l} \nonumber \\
    & =  \sqrt{P_{i,j}P_{j,i}} = D_{i,j}
\end{align}
\end{proof}
\begin{corollary}
\label{cor:UP-original}
    If $P$ is symmetric, $U_P^\dagger \, \F \, U_P$ block encodes $P$.
\end{corollary}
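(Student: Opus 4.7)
The plan is to reduce this immediately to Lemma~\ref{lem:UP-original}. That lemma already establishes that $U_P^\dagger \, \F \, U_P$ block encodes the symmetrised discriminant $D$, where $D_{i,j} = \sqrt{P_{i,j} P_{j,i}}$. So it suffices to show that whenever $P$ is symmetric, the symmetrised discriminant $D$ coincides entry-wise with $P$ itself; the corollary will then follow with no further work.

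To verify $D = P$ under symmetry, I would substitute $P_{j,i} = P_{i,j}$ into the definition of $D$, obtaining $D_{i,j} = \sqrt{P_{i,j}^2}$. The only subtle point is justifying that $\sqrt{P_{i,j}^2} = P_{i,j}$ rather than $|P_{i,j}|$; but this is immediate because entries of a (row-)stochastic transition matrix are probabilities and hence non-negative, so $P_{i,j} \geq 0$ for all $i,j$. Therefore $D_{i,j} = P_{i,j}$ for all $i,j$, i.e., $D = P$.

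Combining these two observations, $\bra{0}\bra{i}\, (U_P^\dagger \, \F \, U_P) \, \ket{0}\ket{j} = D_{i,j} = P_{i,j}$ by the computation in the proof of Lemma~\ref{lem:UP-original}, which is exactly the statement that $U_P^\dagger \, \F \, U_P$ is an unscaled block encoding of $P$ in the sense of Definition~\ref{def:block-encode}. There is no real obstacle here: the corollary is essentially a one-line observation once Lemma~\ref{lem:UP-original} is in hand, with the only thing worth pointing out being the non-negativity of stochastic matrix entries that lets us drop the absolute value when taking the square root.
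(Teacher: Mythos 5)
Your proposal is correct and matches the paper's approach: the paper leaves the corollary unproved because it already contains a remark that $P=D$ for symmetric Markov chains, and combining that with Lemma~\ref{lem:UP-original} (which block-encodes $D$) gives the result immediately. Your explicit verification that $D_{i,j} = \sqrt{P_{i,j}^2} = P_{i,j}$ by non-negativity of stochastic entries is exactly the content of that remark, spelled out.
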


\subsection{Quantum algorithms for relaxation time and spectral gap estimation}
\label{subsec:estimate-relax-time-quantum}

One of the applications suggested in Ref.~\cite{Gily2019} is the use of a singular value filter to fast-forward Markov chain mixing for a block-encoded transition matrix. Whilst, as elaborated upon in the introduction, this only naturally applies to normal Markov chains (whose stationary distribution is therefore uniform), one can appreciate that the thresholding technique in the same paper could be adapted to estimate the singular gap of a Markov chain transition matrix, with some assumptions about the overlap between the singular vectors and the initial state. This idea is exactly encompassed by the framework proposed by Zhang \textit{et al} \cite{zhang2025exponential}, who also address the need for overlap in the initial state. This is something we do in a different way, and the enumeration in the introduction summarises the other ways in which this work is differentiated from, and improves upon, existing literature.

\section{Unscaled transition matrix block encoding for certain algebraically-defined Markov chains}
\label{sec:block-encode}

The only matrices that are both unitary \textit{and} stochastic are permutation matrices, and hence permutation matrices are the only transition matrices that ``block-encode themselves'', in all other cases the unitary must be strictly larger than the block-encoded transition matrix. Usually we are concerned with the case where the matrix $A$ has to be ``scaled-down'' to fit into the unitary block encoding, hence $s > 1$ (for $s$ as in Def.~\ref{def:block-encode}). It is worth noting that standard techniques can be deployed to prepare a (scaled) block-encoding $\frac{P}{s' ||P||}$ for any $s' \geq s$ when $P$ is $s$-sparse \cite{berry2009black, Camps:2022jnx, Sunderhauf2023, qclab}.

However, moving on from generic scaled block-encoding, it is of particular interest to address the case where an unscaled block-encoding is possible in principle. This in turn, depends on the spectral norm, $||A||$, of $A$, 
\begin{fact}
\label{fact:A-block-encode}
    A matrix, $A$, has an unscaled block encoding if and only if $||A|| \leq 1$.
\end{fact}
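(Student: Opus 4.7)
The plan is to prove the two directions separately. Both follow from elementary operator-theoretic facts about unitary dilations; the non-trivial direction is a constructive block-dilation of $A$.

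For the forward direction ($\Rightarrow$), suppose $U_A$ is an unscaled block encoding of $A$, so by Def.~\ref{def:block-encode} we have $A = (\bra{x_1}\otimes I)\,U_A\,(\ket{x_2}\otimes I)$. The tensor-factor operators $\bra{x_1}\otimes I$ and $\ket{x_2}\otimes I$ are isometries, hence each has operator norm $1$, and $\|U_A\|=1$ since $U_A$ is unitary. Submultiplicativity of the spectral norm then yields $\|A\|\leq 1$.

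For the backward direction ($\Leftarrow$), assume $\|A\|\leq 1$. I would first reduce to the case where $A$ is square by padding with zero rows or columns if necessary — this does not change the spectral norm — and then write down the standard dilation
\begin{equation}
U_A := \begin{pmatrix} A & \sqrt{I - AA^\dagger} \\ \sqrt{I - A^\dagger A} & -A^\dagger \end{pmatrix}.
\end{equation}
The hypothesis $\|A\|\leq 1$ ensures that $I - AA^\dagger$ and $I - A^\dagger A$ are positive semidefinite, so their principal positive square roots exist. A direct block-wise computation of $U_A U_A^\dagger$ shows unitarity, the key ingredient being the intertwining identity $A\sqrt{I - A^\dagger A} = \sqrt{I - AA^\dagger}\,A$, which is immediate from the SVD $A = V\Sigma W^\dagger$ since both sides simplify to $V\,\Sigma\sqrt{I-\Sigma^2}\,W^\dagger$. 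Taking $\ket{x_1}=\ket{x_2}=\ket{0}$ on the single-qubit ancilla register, one then recovers $A = (\bra{0}\otimes I)\,U_A\,(\ket{0}\otimes I)$ as the top-left block.

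There is no deep obstacle here; the only subtlety is verifying unitarity of the dilation, which rests entirely on the intertwining identity above. Once that is in hand, the diagonal blocks of $U_A U_A^\dagger$ collapse to $I$ by definition of the square roots, and the off-diagonal blocks cancel, so the construction delivers the required unscaled block encoding.
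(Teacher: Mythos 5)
The paper states this as a ``Fact'' without giving any proof, so there is no paper argument to compare against; you are supplying what the paper leaves implicit. Your proof is correct. The forward direction is the standard submultiplicativity argument: $\ket{x_2}\otimes I$ is an isometry and $\bra{x_1}\otimes I$ a co-isometry, each of norm one, so $\|A\|\le\|U_A\|=1$. The backward direction is the classical Halmos unitary dilation, and your use of the SVD to establish the intertwining identity $A\sqrt{I-A^\dagger A}=\sqrt{I-AA^\dagger}\,A$ is the clean way to verify it; once that is in hand, the blockwise check of $U_AU_A^\dagger=I$ closes as you describe, and reading off the top-left block with $\ket{x_1}=\ket{x_2}=\ket{0}$ on a one-qubit ancilla gives the unscaled block encoding in the sense of Definition~\ref{def:block-encode}. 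The remark about padding a non-square $A$ is a harmless technicality and does not affect the spectral norm. In short, this is a complete and standard proof of a statement the paper treats as known.
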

Notably, the spectral norm is the square root of the largest eigenvalue of $A A^\dagger$, or equivalently the largest singular value of $A$. From this an important result follows:
\begin{lemma}[A special case of Ref.~{\cite[Prop 2.1]{rodtes2021structured}}]
\label{lem:when-block-encodable-a}
    For any Markov chain, 
    \begin{align}
        ||P|| & \geq 1 \\
        ||P|| & = 1 \iff P \textnormal{ is doubly-stochastic}
    \end{align}
\end{lemma}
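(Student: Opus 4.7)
The plan is to tackle the two claims separately, using the definition of the spectral norm as $\|P\| = \max_{x\neq 0} \|Px\|/\|x\|$ together with the row/column sum properties of stochastic matrices.

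For the inequality $\|P\|\geq 1$, I would simply evaluate the Rayleigh-type ratio at the all-ones vector $\boldsymbol{1}$. Row-stochasticity gives $P\boldsymbol{1}=\boldsymbol{1}$ directly from (\ref{eq:stoch-def}), so $\|P\boldsymbol{1}\|/\|\boldsymbol{1}\|=1$, establishing the bound. This also pins down the direction in which equality in $\|P\|\geq 1$ can be attained.

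For the ``$\Leftarrow$'' half of the equivalence, assume $P$ is doubly-stochastic. For any $x\in\RR^N$, apply Cauchy-Schwarz row by row:
\begin{equation}
\Big(\sum_j P_{i,j} x_j\Big)^2 \leq \Big(\sum_j P_{i,j}\Big)\Big(\sum_j P_{i,j} x_j^2\Big) = \sum_j P_{i,j} x_j^2,
\end{equation}
using that each row sums to $1$. Summing over $i$ and swapping the order of summation,
\begin{equation}
\|Px\|^2 \leq \sum_j x_j^2 \sum_i P_{i,j} = \sum_j x_j^2 = \|x\|^2,
\end{equation}
where the equality uses the column-sum condition $\sum_i P_{i,j}=1$. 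Hence $\|P\|\leq 1$, and combined with the previous paragraph, $\|P\|=1$.

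For the ``$\Rightarrow$'' direction I would argue by contrapositive: suppose $P$ is not doubly-stochastic, and show $\|P\|>1$. Since $\|P\|=\|P^T\|$, it suffices to exhibit a single vector inflated by $P^T$. Take $x=\boldsymbol{1}$, so that $(P^T\boldsymbol{1})_i = \sum_j P_{j,i}$ is the $i$-th column sum $c_i$. Row-stochasticity of $P$ gives $\sum_i c_i = \sum_{i,j} P_{j,i} = N$, and failure of double-stochasticity means the vector $(c_1,\dots,c_N)$ is not equal to $\boldsymbol{1}$. By Cauchy-Schwarz (or equivalently the power-mean inequality) applied to a non-constant vector with fixed sum $N$,
\begin{equation}
\sum_i c_i^2 \;>\; \frac{1}{N}\Big(\sum_i c_i\Big)^2 = N = \|\boldsymbol{1}\|^2,
\end{equation}
so $\|P^T\boldsymbol{1}\|>\|\boldsymbol{1}\|$ and hence $\|P\|=\|P^T\|>1$, giving the contrapositive.

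The main subtlety I anticipate is the converse direction: one has to resist the temptation to argue only from eigenvalues (since $P$ need not be normal and $\|P\|$ is not in general the spectral radius), and instead work directly with the singular value / operator norm characterisation. Using $P^T$ applied to $\boldsymbol{1}$ converts the question into a strict convexity statement about column sums, which is where Cauchy-Schwarz / Jensen does the real work.
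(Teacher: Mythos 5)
Your proof is correct, but note that the paper does not actually prove this lemma --- it is stated as a special case of an external result (Proposition~2.1 of Rodtes and Abbasi), so there is no in-paper proof to compare against. Your argument is a valid, self-contained, elementary replacement: evaluating the operator norm at $\boldsymbol{1}$ for the lower bound, row-wise Cauchy--Schwarz plus the column-sum condition for the ``if'' direction, and the power-mean inequality applied to the column-sum vector $P^{T}\boldsymbol{1}$ for the strict ``only if'' direction. For what it is worth, the paper's proof of the \emph{subsequent} Lemma~\ref{lem:when-block-encodable} offers an alternative route to the ``if'' direction: when $P$ is doubly-stochastic, $PP^{\dagger}$ is itself a symmetric stochastic matrix, so its largest eigenvalue is $1$ and hence $\|P\|=1$; your Cauchy--Schwarz computation reaches the same conclusion without appealing to spectral properties of stochastic matrices. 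Your treatment of the ``only if'' direction, by passing to $P^{T}\boldsymbol{1}$ and using strict convexity of the square, is sound and correctly sidesteps the trap of reasoning from eigenvalues of the generally non-normal $P$; it fills in a step that the paper simply outsources to the citation.
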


\begin{lemma}
\label{lem:when-block-encodable}
    A stochastic matrix is block-encodable if and only if it is doubly-stochastic.
\end{lemma}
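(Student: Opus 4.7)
The plan is to combine Fact~\ref{fact:A-block-encode} with Lemma~\ref{lem:when-block-encodable-a} directly, since these two results already pin down the spectral norm condition from both sides. Concretely, Fact~\ref{fact:A-block-encode} characterises block-encodability of an arbitrary matrix $A$ by the inequality $\|A\| \leq 1$, while Lemma~\ref{lem:when-block-encodable-a} tells us that any stochastic matrix $P$ necessarily satisfies $\|P\| \geq 1$, with equality if and only if $P$ is doubly-stochastic. So the work of the proof is really just to chain these equivalences.

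For the forward direction I would assume $P$ is stochastic and has an unscaled block encoding. By Fact~\ref{fact:A-block-encode} this gives $\|P\| \leq 1$. Combined with the lower bound $\|P\| \geq 1$ from Lemma~\ref{lem:when-block-encodable-a}, we conclude $\|P\| = 1$, and the equality clause of the same lemma then forces $P$ to be doubly-stochastic. For the reverse direction, if $P$ is doubly-stochastic then Lemma~\ref{lem:when-block-encodable-a} yields $\|P\| = 1 \leq 1$, and Fact~\ref{fact:A-block-encode} immediately produces an unscaled block encoding.

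There is no real obstacle here: the lemma is essentially a one-line corollary of the two preceding statements, so the main thing to be careful about is just making sure the biconditional is stated in the right direction (i.e., that Fact~\ref{fact:A-block-encode} is phrased as an ``if and only if,'' which it is, so both directions of our lemma go through without any additional argument). If one wanted to be pedantic, one could note that the implicit dimension-counting in Fact~\ref{fact:A-block-encode} places no constraint on $P$ beyond the norm bound, so nothing extra (such as a padding argument to embed $P$ into a unitary of the correct size) is needed.
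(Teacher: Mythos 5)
Your proof is correct and takes essentially the same approach as the paper: both chain Fact~\ref{fact:A-block-encode} with Lemma~\ref{lem:when-block-encodable-a}. The only cosmetic difference is that for the ``doubly-stochastic $\Rightarrow$ block-encodable'' direction the paper re-derives $\|P\|=1$ from scratch (observing that $PP^\dagger$ is itself stochastic and hence has top eigenvalue one), whereas you simply invoke the equality clause of Lemma~\ref{lem:when-block-encodable-a}; your version is, if anything, slightly tidier.
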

\begin{proof}
    First, we show that if $P$ is doubly stochastic then it is block-encodable: For such a doubly stochastic $P$, $P P^\dagger$ is itself a stochastic matrix. It follows that the largest eigenvalue of $P P^\dagger$ is one, and using Fact~\ref{fact:A-block-encode} this suffices to prove that every doubly-stochastic Matrix is block-encodable without scaling. That stochastic matrices that are not doubly stochastic cannot be block-encoded without scaling follows from Lemma~\ref{lem:when-block-encodable-a} and Fact~\ref{fact:A-block-encode}.
\end{proof}

Corollary~\ref{cor:UP-original} gives an explicit block encoding for one class of transition matrix, namely the symmetric stochastic matrices, and this block encoding rests on the fact that $\F$ is a permutation matrix -- as permutations are the unique class of matrices that are both stochastic and unitary (the former needed for the block-encoding to correctly represent probability values, the latter for the entire matrix $U_P^\dagger \F U_P$ to be unitary). However, $\F$ is not the only permutation matrix of dimension $N^2$, and it is interesting to note that the construction $U_P^\dagger R U_P$, where $R$ is a permutation works more generally, and we now give two classes of algebraically-defined Markov chains that can be block encoded in this manner.

\subsection{Markov chains over finite groups}
\label{subsec:group-P}

\begin{defn}
\label{def:chain-over-G}
    Let $G$ be a finite group of order $n$, and let $\mu: G \rightarrow \R$ be an arbitrary probability distribution over $G$, meaning that $\mu(x) \geq 0$ for all $x \in G$ and $\sum_{x \in G} \mu(x) = 1$. We define a Markov chain over $G$ by defining $P_{g,xg} = \mu(x)$ for all $g, x \in G$.
\end{defn}
We now prove that Markov chains over groups can be block-encoded without scaling. To begin, let 
\begin{align*}
	U_P \ket{0, g} = & \sum_{x \in G} \sqrt{P_{g, xg}} |xg, g\rangle, \\
	R|xg, g\rangle = & |x^2g, xg\rangle.
\end{align*}

Note that $U_P$ is exactly the state preparation matrix of (\ref{eq:UP-def}), just indexed in terms of group elements to aid the subsequent analysis. $U_P$ is, as above, known to be unitary, and furthermore:

\begin{lemma}
	The matrix $R$ is a permutation matrix, and is thus unitary. 
\end{lemma}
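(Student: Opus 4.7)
The plan is to show that $R$ permutes the standard basis of $\mathbb{C}^N \otimes \mathbb{C}^N$ indexed by pairs of group elements. First I would observe that the definition $R|xg, g\rangle = |x^2 g, xg\rangle$ actually specifies $R$ on every basis state: given any $(a,b) \in G \times G$, we can uniquely parametrize it as $(xg, g)$ via $g = b$ and $x = a b^{-1}$, so that
\begin{equation}
R |a, b\rangle = |ab^{-1} a, \, a\rangle.
\end{equation}
Thus $R$ is determined by the map $\phi: G \times G \to G \times G$ given by $\phi(a,b) = (ab^{-1}a, a)$.

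Next I would verify that $\phi$ is a bijection. The cleanest way is to exhibit an explicit inverse: given $(c, a)$, solving $ab^{-1}a = c$ for $b$ gives $b = a c^{-1} a$, so $\phi^{-1}(c, a) = (a, \, a c^{-1} a)$. A direct check confirms $\phi \circ \phi^{-1}$ and $\phi^{-1}\circ \phi$ are the identity on $G \times G$. (Equivalently, since $G \times G$ is finite, it would suffice to check injectivity: if $\phi(a,b) = \phi(a',b')$ then the second coordinate gives $a = a'$, and then $ab^{-1}a = ab'^{-1}a$ forces $b = b'$ after multiplying by $a^{-1}$ on both sides.)

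Once $\phi$ is shown to be a bijection of $G \times G$, the operator $R$ sends each standard basis vector to another standard basis vector in a one-to-one fashion, which is exactly the definition of a permutation matrix. Permutation matrices are unitary, so the claim follows.

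The only mildly subtle point — and the one I would expect to trip up a reader — is the reparametrization step: one must see that the rule $R|xg,g\rangle = |x^2 g, xg\rangle$ is not merely defining $R$ on a subset of basis vectors but, because $(x, g) \mapsto (xg, g)$ is already a bijection of $G \times G$ (for fixed $g$, left multiplication by any element of $G$ is a bijection), it covers every basis state exactly once. After that observation the rest is a short algebraic manipulation in $G$.
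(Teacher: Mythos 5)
Your proof is correct, and it takes a slightly more explicit route than the paper. The paper proves injectivity on basis states: assuming $R|xg,g\rangle = R|yh,h\rangle$, it matches second registers to get $xg = yh$, then first registers to get $x^2g = yxg$ and hence $x = y$ (and thus $g = h$), concluding via finiteness that $R$ permutes the basis. You instead reparametrize every basis vector $|a,b\rangle$ as $|(ab^{-1})b,\,b\rangle$ and exhibit an explicit two-sided inverse $\phi^{-1}(c,a) = (a,\,ac^{-1}a)$. Both are valid; your version has two small advantages. First, it makes explicit something the paper leaves tacit, namely that the rule $R|xg,g\rangle = |x^2g,xg\rangle$ really does define $R$ on \emph{all} basis states because $(x,g)\mapsto(xg,g)$ is itself a bijection of $G\times G$ — the paper simply assumes this. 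Second, constructing the inverse directly establishes surjectivity without an appeal to finiteness, so the same argument would go through even if one cared about a setting where cardinality arguments were unavailable. For the purposes of this lemma, however, both proofs amount to the same short computation in $G$ and are equally convincing.
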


\begin{proof}
	It suffices to show that $R$ is injective on the basis states. Suppose that we have $R|xg, g\rangle = R|yh, h\rangle$ for some $g, h, x, y \in G$. Then by definition this means $|x^2g, xg\rangle = |y^2 h, yh\rangle$, which only holds if $xg = yh$. Therefore  $|x^2g, xg\rangle = |yxg, xg\rangle$. Since $x^2g = yxg$ then we must have $x = y$. Therefore $R$ is injective and therefore a permutation matrix. 
\end{proof}

From which the main result (of this subsection) follows:

\begin{thm}
	The matrix $U_P^\dagger R^\dagger U_P$ is a unitary which block encodes $P$. 
\end{thm}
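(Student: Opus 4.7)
The plan is to verify the two required properties separately: unitarity and the block-encoding identity. Unitarity is essentially free: $U_P$ is unitary by construction (it is a standard state-preparation unitary whose unspecified columns can always be completed so that it is unitary), and $R$ has just been shown to be a permutation matrix and hence unitary. Since the composition of unitaries is unitary, $U_P^\dagger R^\dagger U_P$ is unitary.

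For the block-encoding identity, I need to show that
\begin{equation*}
(\bra{0} \otimes I)\, U_P^\dagger R^\dagger U_P\, (\ket{0} \otimes I) = P,
\end{equation*}
i.e., that $\bra{0,g}\, U_P^\dagger R^\dagger U_P\, \ket{0,h} = P_{g,h}$ for every pair $g,h \in G$. I would compute both sides of the inner product explicitly. From the definition of $U_P$,
\begin{equation*}
U_P \ket{0,h} = \sum_{y \in G} \sqrt{\mu(y)}\, \ket{yh, h},
\qquad
\bra{0,g} U_P^\dagger = \sum_{x \in G} \sqrt{\mu(x)}\, \bra{xg, g}.
\end{equation*}
Applying $R$ to the basis states in the bra side (so $R^\dagger$ is applied to $U_P$ on the right) produces $\bra{xg,g}R^\dagger$ acting nontrivially only on those vectors for which $R\ket{xg,g} = \ket{x^2 g, xg}$, giving $\bra{0,g}U_P^\dagger R^\dagger = \sum_{x \in G}\sqrt{\mu(x)}\,\bra{x^2 g, xg}$.

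Combining these two expressions yields
\begin{equation*}
\bra{0,g}\, U_P^\dagger R^\dagger U_P\, \ket{0,h} = \sum_{x,y \in G} \sqrt{\mu(x)\mu(y)}\,\braket{x^2 g, xg \,|\, yh, h},
\end{equation*}
and the inner product is nonzero only when $xg = h$ and $x^2 g = yh$. The first equation forces $x = hg^{-1}$; substituting into the second gives $yh = (hg^{-1})^2 g = hg^{-1}h$, hence $y = hg^{-1} = x$. Exactly one term survives, and its value is $\mu(hg^{-1}) = P_{g,(hg^{-1})g} = P_{g,h}$, which is the desired identity.

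The only real subtlety is the group-theoretic bookkeeping: one must be careful in the non-abelian case to keep the order of multiplication consistent (in particular that $xg=h$ means $x=hg^{-1}$ on the left, not $g^{-1}h$) and to verify that the two constraints $xg=h$ and $x^2 g = yh$ really do force $x=y$ uniquely rather than producing a larger solution set. Once this is handled cleanly, the block-encoding claim reduces to a one-line sum with a single surviving delta.
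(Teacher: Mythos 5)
Your proof is correct and follows essentially the same route as the paper's: compute the matrix element $\bra{0,g}U_P^\dagger R^\dagger U_P\ket{0,h}$ by expanding both $U_P$ applications as sums over group elements, use the action of $R$ (equivalently $R^\dagger$ on the bra side) to rewrite the inner products, and solve the resulting delta constraints $xg=h$, $x^2g=yh$ to find the single surviving term $\mu(hg^{-1}) = P_{g,h}$. The only cosmetic difference is the choice of which of $g,h$ labels the bra versus the ket, and your explicit handling of the non-abelian bookkeeping and the unitarity remark are both correct.
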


\begin{proof}
	Consider the following entry
\begin{equation*}
\begin{split}
	\langle 0, h |U_P^\dagger R^\dagger U_P |0, g\rangle &= \sum_{x, y \in G} \sqrt{P_{g, xg} P_{h,yh}} \langle yh, h |R^\dagger |xg, g\rangle \\
	&= \sum_{x, y \in G} \sqrt{P_{g, xg} P_{h,yh}} \langle y^2h, yh |xg, g\rangle \\
\end{split}
\end{equation*}
Solving for the $x, y \in G$ which contribute to the sum we see this holds when $g=yh$ and $xg=y^2h$. The first equation implies that $y = gh^{-1}$, after which the second reads $xyh=y^2h$. This then implies $x = y$, so our entry is equal to 
\begin{equation*}
	\langle 0, h|U_P^\dagger R^\dagger U_P |0, g \rangle =  \sqrt{P_{yh, y^2h} P_{h,yh}} = \sqrt{\mu(y) \mu(y)} = \mu(gh^{-1}) = P_{h,g},
\end{equation*}
meaning that $U_P^\dagger R^\dagger U_P$ is indeed a block encoding of $P$.
\end{proof}

It is, of course, incumbent upon us to ask whether this is actually the construction of something \textit{new}.

\begin{lemma}
    Any Markov chain over a group, as defined in Def.~\ref{def:chain-over-G}, is reversible if and only if $\mu(x)=\mu(x^{-1})$ for all $x \in G$.
\end{lemma}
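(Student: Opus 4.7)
The plan is to reduce the detailed balance equation to a simple condition on $\mu$ by exploiting the group structure. First I would establish that the stationary distribution of any such Markov chain is uniform. This follows because the chain is doubly-stochastic: the row sums are $\sum_x \mu(x) = 1$ by construction, and for the column at a fixed $h \in G$ one has
\begin{equation*}
    \sum_{g \in G} P_{g,h} = \sum_{\substack{x \in G \\ h = xg}} \mu(x) = \sum_{x \in G} \mu(x) = 1,
\end{equation*}
since for each $x$ there is a unique $g = x^{-1}h$ contributing. By the remark following the definition of doubly-stochastic chains, the stationary distribution is then $\pi_g = 1/|G|$ for all $g \in G$.

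Next I would rewrite the transition probabilities in a form suitable for detailed balance. Since $P_{g,h}$ is nonzero precisely when $h = xg$ for some (necessarily unique) $x \in G$, we can write $P_{g,h} = \mu(hg^{-1})$, and symmetrically $P_{h,g} = \mu(gh^{-1})$. Substituting the uniform stationary distribution into the detailed balance equation $\pi_g P_{g,h} = \pi_h P_{h,g}$ reduces it to
\begin{equation*}
    \mu(hg^{-1}) = \mu(gh^{-1}) \quad \text{for all } g,h \in G.
\end{equation*}

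Setting $y = hg^{-1}$, so that $y^{-1} = gh^{-1}$, and observing that as $(g,h)$ ranges over $G \times G$ the element $y$ ranges over all of $G$, the detailed balance condition is exactly equivalent to $\mu(y) = \mu(y^{-1})$ for every $y \in G$. This gives both directions of the equivalence at once, so no separate converse argument is required. The only step requiring any real care is the bookkeeping in the first paragraph to confirm that the chain is doubly-stochastic and hence has uniform stationary distribution; once that is in hand the remainder is a one-line substitution.
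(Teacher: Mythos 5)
Your proof is correct, but it takes a noticeably different route from the paper's. The paper first shows that $P^T_{g,xg} = P_{xg,x^{-1}(xg)} = \mu(x^{-1})$, so that $\mu(x) = \mu(x^{-1})$ for all $x$ is equivalent to $P = P^T$; the forward direction then follows because symmetric chains are trivially reversible, while the converse is handled indirectly by appealing to the explicit block-encoding constructed earlier in that section (which forces $P$ to be doubly-stochastic by Lemma~\ref{lem:when-block-encodable}) together with Lemma~\ref{lem:intersection-lem} (reversible and doubly-stochastic implies symmetric). Your argument instead verifies double-stochasticity directly by the column-sum computation, passes immediately to the uniform stationary distribution, substitutes it into the detailed-balance equation to obtain $\mu(hg^{-1}) = \mu(gh^{-1})$, and observes that this is literally the same condition as $\mu(y) = \mu(y^{-1})$ for all $y$. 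This is more self-contained (it does not rely on the block-encoding machinery or the intersection lemma) and handles both implications simultaneously, whereas the paper's proof is shorter given the lemmas already in hand. The one presentational nit in your write-up is the column-sum display, where the middle index set $\{x \in G : h = xg\}$ reads as if it is quantifying over both $x$ and $g$ at once; writing the re-indexing as $\sum_{g \in G} \mu(hg^{-1}) = \sum_{y \in G}\mu(y)$ via the bijection $y = hg^{-1}$ would be cleaner, but the underlying reasoning is sound.
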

\begin{proof}
    Consider some entry $P_{g,xg} = \mu(x)$, then 
    \begin{equation}
        P^T_{g,xg} = P^T_{x^{-1}x g, xg} = P_{xg, x^{-1} (xg)} = \mu(x^{-1})
    \end{equation}
    So it follows that $\forall x, \, \mu(x) = \mu(x^{-1}) \iff P = P^T$, i.e., $P$ is symmetric, and symmetric Markov chains are always reversible. Conversely, as $P$ has an unscaled block-encoding (i.e., by the explicit construction, above), we know from Lemma~\ref{lem:when-block-encodable} that it is doubly-stochastic and if it is also reversible then it is certainly symmetric by Lemma~\ref{lem:intersection-lem}.
\end{proof}

\subsection{Markov chains on graphs with a linear ordering}

\begin{defn}
\label{def:chain-over-order}
Say $P$ is the transition matrix of a Markov chain over a set $\Omega$ endowed with a linear ordering if the underlying digraph is $d$-out-regular, and such that for all $i=1,...,d$ there is a function $f_i: \Omega \rightarrow \Omega$, where $f_i(x)$ is interpreted as the $i^{th}$ neighbor of $x$. 
\end{defn}

We define matrices $R$ and $U_P$ which act as 
\begin{equation*}
	U_P |0, x\rangle = \sum_{i=1}^n \sqrt{P_{x, f_i(x)}} | f_i(x), x\rangle
\end{equation*}
\begin{equation*}
	R|f_i(x), x\rangle = |f_i \circ f_i (x), f_i(x) \rangle
\end{equation*}
Once again note that $U_P$ is a special case of that in (\ref{eq:UP-def}), indexed to aid the following analysis. 
\begin{remark}
    $R$ acts as a permutation when, for any fixed $x \in \Omega$, $f_i(x)=f_j(x)$ implies that $i=j$, i.e., it is a simple digraph.
\end{remark}
By definition, we take a Markov chain as a simple digraph, and so $R$ is certainly unitary, as is $U_P$, hence it is enough to show that $U_P^\dagger R^\dagger U_P$ is a block encoding and unitarity is automatic.

\begin{thm}
	Suppose that $P$ is a Markov chain over $\Omega$ endowed with a linear ordering $\{f_i\}_{i=1}^d$ such that $P_{x, f_i(x)} = P_{y,f_i(y)}$ for all $x,y \in \Omega$ and $i=1,...,d$. Then $U_P^\dagger R^\dagger U_P$ is a unitary block encoding of $P$.
\end{thm}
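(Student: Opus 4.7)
The plan is to follow the template of the group-case proof. Both $U_P$ (a special case of \ref{eq:UP-def}) and $R$ (by the simple-digraph remark preceding the theorem) are unitary, so $U_P^\dagger R^\dagger U_P$ is automatically unitary; the content of the theorem is therefore to verify the block-encoding identity $\langle 0, y | U_P^\dagger R^\dagger U_P | 0, x \rangle = P_{y,x}$ for every $x, y \in \Omega$.

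First I would expand the matrix element as a double sum over the out-neighbours of $x$ and $y$:
\begin{equation*}
    \langle 0, y | U_P^\dagger R^\dagger U_P | 0, x \rangle = \sum_{i,j=1}^{d} \sqrt{P_{x, f_i(x)} \, P_{y, f_j(y)}} \; \langle f_j(y), y | R^\dagger | f_i(x), x \rangle.
\end{equation*}
Moving $R^\dagger$ onto the bra via $\langle \phi | R^\dagger | \psi \rangle = \overline{\langle \psi | R | \phi \rangle}$ and inserting the definition of $R$, the inner product $\langle f_j(y), y | R^\dagger | f_i(x), x \rangle$ evaluates to $1$ precisely when both $x = f_j(y)$ and $f_i(x) = f_j(f_j(y))$ hold, and to $0$ otherwise. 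Substituting the first equation into the second reduces it to $f_i(x) = f_j(x)$, at which point the simple-digraph property forces $i = j$.

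Consequently only terms with $i = j$ and $x = f_i(y)$ survive. By the simple-digraph assumption, for a fixed pair $(x, y)$ there is at most one such index $i$. If no such $i$ exists, $x$ is not an out-neighbour of $y$ and the sum vanishes, which correctly matches $P_{y,x} = 0$. If the unique index $i$ does exist, the surviving term is
\begin{equation*}
    \sqrt{P_{x, f_i(x)} \, P_{y, f_i(y)}} \;=\; P_{y, f_i(y)} \;=\; P_{y, x},
\end{equation*}
where the first equality uses the uniformity hypothesis $P_{x, f_i(x)} = P_{y, f_i(y)}$ and the second uses $x = f_i(y)$. This gives the required identity.

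The main obstacle is really just the delta-function bookkeeping: recognising that the action of $R^\dagger$ imposes the two equations $x = f_j(y)$ and $f_i(x) = f_j(f_j(y))$, and then invoking the simple-digraph assumption to collapse them to $i = j$ together with $x = f_i(y)$. Once that reduction is carried out, the hypothesis $P_{x, f_i(x)} = P_{y, f_i(y)}$ is exactly what is needed to turn the square root $\sqrt{p_i \cdot p_i}$ into the desired transition probability $P_{y,x}$, so the remainder of the argument is routine and parallels the group case almost verbatim, with the right-translation bijection $g \mapsto xg$ replaced by the neighbour maps $f_i$.
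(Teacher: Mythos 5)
Your proof is correct and follows essentially the same route as the paper's: expand the matrix element as a double sum, use the action of $R$ to extract the delta conditions $x = f_j(y)$ and $f_i(x) = f_j\circ f_j(y)$, invoke the simple-digraph property to collapse to $i=j$, and finish with the uniformity hypothesis $P_{x,f_i(x)} = P_{y,f_i(y)}$. In fact you are slightly more careful than the printed argument: the paper's final line writes the result as $P_{x,y}$, but (as your derivation shows, and consistently with the block-encoding convention and the analogous group-case calculation yielding $P_{h,g}$) the matrix element $\langle 0, y |U_P^\dagger R^\dagger U_P | 0, x\rangle$ equals $P_{y,f_i(y)} = P_{y,x}$.
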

\begin{proof}
\begin{equation*}
\begin{split}
	\langle  0, y|U_P^\dagger R^\dagger U_P | 0, x\rangle & = \sum_{i=1}^n \sum_{j=1}^n \sqrt{P_{x, f_i(x)} P_{y, f_j(y)}} \langle  f_j(y), y| R^\dagger |f_i(x), x\rangle  \\
	& = \sum_{i=1}^n \sum_{j=1}^n \sqrt{P_{x, f_i(x)} P_{y, f_j(y)}} \langle f_j \circ f_j(y), f_j(y)|  f_i(x), x\rangle  \\
\end{split}
\end{equation*}
If $x \neq f_j(y)$ for all $j$ then the sum equals $0$, which agrees with $P_{x,y}=0$. Otherwise, this picks out the term in the sum with $x=f_j(y)$, giving 
\begin{equation*}
	 \langle 0,y |U_P^\dagger R^\dagger U_P | 0, x\rangle = \sum_{i=1}^n \sqrt{P_{x, f_i (x) } P_{y, x}} \langle  f_j(x),x|f_i(x),x\rangle
\end{equation*}
Now picking out the term with $i=j$ we obtain 
\begin{equation*}
	 \langle 0, y|U_P^\dagger R^\dagger U_P |0, x\rangle = \sqrt{P_{x, f_i(x)} P_{y, x}} = \sqrt{P_{x, f_i(x)} P_{y, f_i(y)}} = P_{x, f_i(x)} = P_{x,y}
\end{equation*}
\end{proof}

Once again, the fact that it is a block encoding is enough to show that such Markov chains are certainly doubly-stochastic. Moreover, we can show that certain Markov chains that are not symmetric (and therefore are not reversible) adhere to this structure:

\begin{example}
    Consider a cycle as a Markov chain, then we have $d=1$ and $P_{x, f_1 (x)} = 1$ for the single out-edge of each vertex. So such Markov chains meet the conditions, and are thusly block-encodable, but in general are not symmetric (apart from the trivial case of a single vertex, or for non-ergodic chains where each vertex has just a self-loop).
\end{example}

\section{A quasi-optimal quantum algorithm for Markov chain spectral / singular gap estimation}

The essential idea for the singular / spectral gap estimation algorithm is to use bisection search as a loop around an algorithm that \textit{decides} if the second largest singular value is greater or smaller than some threshold (with a specified relative error). As this decision algorithm is at the heart of the overall quantum advantage, and is non-trivial, it is worth first establishing this as a claim in its own right.

\subsection{An algorithm for second largest eigenvalue thresholding}

This algorithm requires three essential components:
\begin{enumerate}
    \item An ensemble of initial states such that with high probability the overlap with any singular vector of the Markov chain transition matrix / symmetrised discriminant is approximately equal to the expectation thereof.
    \item A filter that is a polynomial function of appropriate degree to suppress the singular values below a threshold using the QSVT.
    \item Quantum counting to distinguish whether the second largest eigenvector has been filtered out or not.
\end{enumerate}

We now address these components in turn.

\subsubsection*{Preparing a suitable initial ensemble of states}

Let $N$ be the dimension of a Hilbert space, such that $n = \log_2 N$ is an integer (the number of qubits).
\begin{lemma}
    An ensemble of states that is a unitary 2-design, $\mathcal{U}$, has that
    the following properties of $\mathcal{U}$ hold for every state, $\ket{\psi}$ of dimension $n$:
\begin{align}
    \mathbb{E}_\mathcal{U} (|\bra{\psi} U \ket{0}|^2) & = \frac{1}{N} \\
    \lim_{\substack{N \to \infty}} \mathbb{E}_\mathcal{U} (|\bra{\psi} U \ket{0}|^4) - \left(\mathbb{E}_\mathcal{U} (|\bra{\psi} U \ket{0}|^2) \right)^2 & \to 0
\end{align}
\end{lemma}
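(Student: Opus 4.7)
The plan is to reduce both identities to standard Haar-measure integrals, using the defining property of a unitary 2-design: the average over $\mathcal{U}$ of any polynomial of degree at most $2$ in the entries of $U$ and at most $2$ in the entries of $U^\dagger$ coincides with the same average over the Haar measure on the unitary group. Since $|\bra{\psi}U\ket{0}|^2$ and $|\bra{\psi}U\ket{0}|^4$ are polynomials of bidegree $(1,1)$ and $(2,2)$ respectively, we may replace $\mathbb{E}_{\mathcal{U}}$ by $\mathbb{E}_{U\sim\mathrm{Haar}}$ at the outset.

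For the first moment, I would rewrite $|\bra{\psi}U\ket{0}|^2 = \bra{\psi} U \ket{0}\bra{0} U^\dagger \ket{\psi}$ and use the first-order Haar identity $\int U A U^\dagger \, dU = \frac{\tr(A)}{N} I$ with $A = \ket{0}\bra{0}$, which immediately gives $\mathbb{E}_{\mathrm{Haar}}[|\bra{\psi}U\ket{0}|^2] = \frac{1}{N}\braket{\psi|\psi} = \frac{1}{N}$.

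For the second moment I would lift the computation to the tensor square. Writing
\begin{equation}
|\bra{\psi}U\ket{0}|^4 = \bra{\psi}^{\otimes 2}\, U^{\otimes 2}\, (\ket{0}\bra{0})^{\otimes 2}\, (U^\dagger)^{\otimes 2}\, \ket{\psi}^{\otimes 2},
\end{equation}
I would invoke the standard second-order twirl,
\begin{equation}
\int U^{\otimes 2} A \, (U^\dagger)^{\otimes 2}\, dU \;=\; \alpha\, I + \beta\, \mathbb{S},
\end{equation}
where $\mathbb{S}$ is the swap on $\mathbb{C}^N\otimes\mathbb{C}^N$ and $\alpha,\beta$ are determined by the two linear equations $\tr$ and $\tr(\mathbb{S}\,\cdot)$ applied to both sides together with $\tr(\mathbb{S})=N$, $\tr(I)=N^2$. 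Plugging in $A=(\ket{0}\bra{0})^{\otimes 2}$ (so $\tr(A)=1$ and $\tr(\mathbb{S}A)=1$) and contracting with $\ket{\psi}^{\otimes 2}$ (noting $\bra{\psi}^{\otimes 2}\mathbb{S}\ket{\psi}^{\otimes 2}=|\braket{\psi|\psi}|^2=1$) I would obtain the clean closed form $\mathbb{E}_{\mathrm{Haar}}[|\bra{\psi}U\ket{0}|^4] = \frac{2}{N(N+1)}$.

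Combining these two moments yields
\begin{equation}
\mathbb{E}_{\mathcal{U}}\!\left[|\bra{\psi}U\ket{0}|^4\right] - \left(\mathbb{E}_{\mathcal{U}}\!\left[|\bra{\psi}U\ket{0}|^2\right]\right)^2 \;=\; \frac{2}{N(N+1)} - \frac{1}{N^2} \;=\; \frac{N-1}{N^2(N+1)},
\end{equation}
which is $O(1/N^2)$ and hence tends to $0$ as $N\to\infty$. The only real subtlety I expect is bookkeeping the $(\alpha,\beta)$ coefficients in the second-moment twirl correctly; once those are solved (from $\alpha N^2 + \beta N = \tr(A)$ and $\alpha N + \beta N^2 = \tr(\mathbb{S}A)$), everything else is a one-line substitution. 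Invoking the 2-design property only at the very beginning keeps the remainder of the argument purely representation-theoretic and avoids needing any specific construction of $\mathcal{U}$.
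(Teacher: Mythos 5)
Your proof is correct, and it rests on the same key lemma as the paper's, namely the second-order Haar twirl $\int U^{\otimes 2} A\, (U^\dagger)^{\otimes 2}\, dU = \alpha I + \beta\, \mathbb{S}$. The route to apply that twirl differs slightly, and yours is the cleaner one: you tensor-square the scalar directly,
\begin{equation}
|\bra{\psi}U\ket{0}|^4 = \bra{\psi}^{\otimes 2}\, U^{\otimes 2} (\ket{0}\bra{0})^{\otimes 2} (U^\dagger)^{\otimes 2}\, \ket{\psi}^{\otimes 2},
\end{equation}
twirl $(\ket{0}\bra{0})^{\otimes 2}$, and contract with $\ket{\psi}^{\otimes 2}$, using $\bra{\psi}^{\otimes 2}\mathbb{S}\ket{\psi}^{\otimes 2}=1$. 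The paper instead keeps $\ket{0}\bra{0}$ outside, twirls $(\ket{\psi}\bra{\psi})^{\otimes 2}$ with $U^\dagger(\cdot)U$, and reaches the same $\alpha+\beta$ after invoking the partial-trace identity $\tr_2\big((A\otimes B)\mathbb{S}\big) = AB$. Both land on $\frac{2}{N(N+1)}$; yours avoids introducing the partial trace entirely, and also makes the final variance $\frac{N-1}{N^2(N+1)} = O(1/N^2)$ explicit (the paper simply asserts the limit). One small remark: the paper, like you, observes that a $1$-design already suffices for the first moment — worth keeping that distinction if you write this up, since it locates exactly where the $2$-design hypothesis is actually used.
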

\begin{remark}
\label{rem:two-design}
the first condition guarantees that the expected overlap is what we require; the second says that the variance tends to zero as $N \to \infty$, and so means that for any $0 < p < 1$ and $0 < k < 1$, there is a value $N^*$ such that for all $N > N^*$, the following holds when $U \sim \mathcal{U}$:
\begin{equation}
    \mathrm{Pr}\left(|\bra{\psi} U \ket{0}|^2 \geq \frac{1-k}{N}\right) > p
\end{equation}
and also 
\begin{equation}
    \mathrm{Pr} \left(|\bra{\psi} U \ket{0}|^2 \leq \frac{1+k}{N}\right) > p
\end{equation}
\end{remark}

\begin{proof}
    Suppose that $\mathcal{U}$ is a unitary $1$-design. By definition, this means that for any observable $O$ acting on $\CC^N$ we have
\[ \frac{1}{|\mathcal U|} \sum_{U \in \mathcal U} U O U^{\dagger} = \int_{U(N)} U O U^{\dagger} d \mu, \]
where $\mu$ denotes the Haar measure on $U(N)$. It is known that $\int_{U(N)} U O U^{\dagger} d \mu = \frac{\tr(O)}{N}I$. Taking $O =\ket{\psi} \bra{\psi}$ and considering the process where we sample a unitary uniformly at random from $\mathcal{U}$ we obtain the expectation
\begin{equation}
\begin{split}
    \E_\mathcal{U} \big[|\bra{\psi} U \ket{0}|^2\big] & = \E_\mathcal{U} \big[ \bra{0} U^\dagger \ket{\psi} \bra{\psi} U \ket{0} \big] \\
    & =\bra{0} \E_\mathcal{U} \big[  U^\dagger \ket{\psi} \bra{\psi} U  \big] \ket{0} \\
    & =\bra{0} \frac{\tr(\ket{\psi} \bra{\psi})}{N}I \ket{0} \\
    & = \frac{1}{N} \\
\end{split}
\end{equation}
This shows that a unitary $1$-design suffices to satisfy the first requirement. For the second requirement, suppose that $\mathcal{U}$ is additionally a unitary $2$-design. This means that 
\begin{equation}
    \frac{1}{|\mathcal U|} \sum_{U \in \mathcal U} U^{\otimes 2} O U^{\dagger \otimes 2} = \int_{U(N)} U^{\otimes 2} O U^{\dagger \otimes 2} d \mu,
\end{equation}
for every observable $O$. There is a well-known identity that 
\begin{equation}
    \int_{U(N)} U^{\otimes 2} O U^{\dagger \otimes 2} d \mu = \alpha I + \beta \F,
\end{equation}
where $I$ is the identity acting on $\CC^N \otimes \CC^N$, $\F$ remains the SHIFT operator acting on $\CC^N \otimes \CC^N$ as $\ket{x,y} \mapsto \ket{y, x}$, and the coefficients $\alpha, \beta$ are given by
\begin{equation}
    \alpha = \frac{\tr(O) - \frac{1}{N}\tr(\F O)}{N^2 - 1},
\end{equation}
\begin{equation}
    \beta = \frac{\tr(\F O) - \frac{1}{N}\tr(O)}{N^2 - 1}.
\end{equation}
We will use these facts along with the partial trace identity
\begin{equation}
    \tr_2 \Big((A \otimes B) \F \Big) = AB
\end{equation}
to simplify the expectation of $|\bra{\psi} U \ket{0}|^4$. Consider
\begin{equation}
\begin{split}
    |\bra{\psi} U \ket{0}|^4 &= \bra{0} \Big( U^\dagger \ket{\psi} \bra{\psi} U \Big) \Big( \ket{0} \bra{0} U^\dagger \ket{\psi} \bra{\psi} U \Big) \ket{0} \\
    &= \bra{0} \tr_2 \Big( \big( U^\dagger \ket{\psi} \bra{\psi} U \big) \otimes \big( \ket{0} \bra{0} U^\dagger \ket{\psi} \bra{\psi} U \big) \F \Big) \ket{0} \\
    &= \bra{0} \tr_2 \Big( \big(I \otimes \ket{0}\bra{0}\big) \big( U^{\dagger \otimes 2} (\ket{\psi} \bra{\psi})^{\otimes 2} U^{\otimes 2} \big) \F \Big) \ket{0} \\
\end{split}
\end{equation}
where we have used the partial trace identity above. Taking expectation and applying the identities above we obtain
\begin{equation}
\begin{split}
    \bra{0} \tr_2 \Big( \big(I \otimes \ket{0}\bra{0}\big) \big( \alpha I + \beta \F \big) \F \Big) \ket{0} &= \alpha \bra{0} \tr_2 \Big( \big(I \otimes \ket{0}\bra{0}\big) \F  \Big) \ket{0} + \beta \bra{0} \tr_2 \big( I \otimes \ket{0}\bra{0} \big) \ket{0}  \\
    &= \alpha \bra{0} \big(\ket{0}\bra{0}\big) \ket{0} + \beta \bra{0} I \ket{0}  \\
    &= \alpha + \beta  \\
\end{split}
\end{equation}
Lastly, simplifying the constants $\alpha, \beta$ in this case where $O = \ket{\psi} \bra{\psi}^{\otimes 2}$, we see that $\tr(O)=1$ and $\tr(\F O ) = \tr(O \F) = \tr(\ket{\psi} \bra{\psi} \ket{\psi} \bra{\psi}) = 1$, hence in summary we have the identity
\begin{equation}
    \E\Big[\big|\bra{\psi} U \ket{0}\big|^4\Big] = \frac{1 - \frac{1}{N}}{N^2 - 1} + \frac{1 - \frac{1}{N}}{N^2 - 1} = \frac{2}{N(N+1)}
\end{equation}
thus satisfying the second property.
\end{proof}

It is worth remarking on the fact that unitary 2-designs have been well-studied in the literature, and it is known that they can be prepared in shallow circuit depth \cite{two-design}. More generally, focusing on Haar randomness negates any ``adversarial'' attack along the lines that any proposed ensemble that does not overlap much with a certain state has the vulnerability that said state can be chosen to be the second largest eigenvector and a transition matrix constructed accordingly. However, beyond such contrived situations, it is likely that in practice other ensembles of states would fair well.

\subsubsection*{Polynomial functions for singular value filtering}

A first option is to use the polynomial transformation for singular value (or eigenvalue) filtering from Ref.~\cite[Theorem~31]{Gily2019}.

\begin{lemma}
\label{lem:alg-filter-sign}
    For every $\epsilon >0$, $0 < \Delta \leq 1$, there exists a family of polynomial functions $f_d^{(1)}(x)$ (one polynomial for each degree, $d$) such that for any $0 < t < 1$, $-1 \leq x \leq 1, \, |f_d^{(1)}(x)| \leq 1$ and
    \begin{align}
          0 \leq x \leq 1 - \Delta(1 + t) : & \,\,\,\,\,\,\, |f_d^{(1)}| < \alpha \\
        1 - \Delta(1 - t)  \leq x \leq 1 :  & \,\,\,\,\,\,\, \frac{1}{4} \leq f_d^{(1)}
    \end{align}
    Moreover, $d = \mathcal{O} \left( \frac{1}{\Delta t} \log \frac{1}{\alpha} \right)$.
\end{lemma}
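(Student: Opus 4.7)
The plan is to derive this as a straightforward corollary of the well-known polynomial approximation of $\mathrm{sign}$ underlying QSVT-based filtering (cf.\ Lemma 14 / Theorem 31 of Gily\'en et al.). The fact I would invoke is: for any $\delta \in (0, 1/2]$ and $\alpha' \in (0, 1/2)$ there exists a polynomial $P_\delta$ of degree $\mathcal{O}\!\bigl(\tfrac{1}{\delta}\log\tfrac{1}{\alpha'}\bigr)$ with $|P_\delta(y)| \leq 1$ on $[-1, 1]$ and $|P_\delta(y) - \mathrm{sign}(y)| \leq \alpha'$ on $[-1, -\delta] \cup [\delta, 1]$. Since all analytic difficulty is buried in this well-studied object, the remaining work is just an affine re-parametrisation.

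I would then perform an affine change of variable that places the dead zone of $P_\delta$ symmetrically around the threshold $1-\Delta$. Set $a := 1/(2-\Delta) \in [1/2, 1]$ and $b := -a(1-\Delta)$, so that the map $y = ax + b$ sends $x = 1-\Delta$ to $0$, $x = -1$ to $-1$, and $x = 1$ to $a\Delta \in (0, 1]$; hence it carries $[-1, 1]$ into itself. The window $[1-\Delta(1+t),\,1-\Delta(1-t)]$ is mapped exactly to $[-a\Delta t,\,a\Delta t]$, so I would choose $\delta := a\Delta t$ and define $f_d^{(1)}(x) := \tfrac12\bigl(1 + P_\delta(ax+b)\bigr)$. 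Setting $\alpha' := 2\alpha$, the three required bounds read off directly: $|f_d^{(1)}| \leq 1$ on $[-1,1]$; on $[0, 1 - \Delta(1+t)]$ the shifted argument lies in $[-1, -\delta]$ so $P_\delta \leq -1 + \alpha'$ and $f_d^{(1)} \in [0, \alpha]$; on $[1 - \Delta(1-t), 1]$ the shifted argument lies in $[\delta, 1]$ so $P_\delta \geq 1 - \alpha'$ and $f_d^{(1)} \geq 1 - \alpha \geq 1/4$ for any useful $\alpha$. The degree of $f_d^{(1)}$ equals that of $P_\delta$, i.e.\ $\mathcal{O}\!\bigl(\tfrac{1}{\delta}\log\tfrac{1}{\alpha}\bigr) = \mathcal{O}\!\bigl(\tfrac{1}{\Delta t}\log\tfrac{1}{\alpha}\bigr)$ since $a \geq 1/2$.

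The main subtlety I anticipate concerns parity: the affine shift destroys the oddness of $P_\delta$, so $f_d^{(1)}$ is neither even nor odd, whereas downstream use of this filter through QSVT on singular values requires a polynomial of definite parity. For the lemma as stated (which only asserts pointwise behaviour and a degree bound) this is immaterial, but for the subsequent algorithmic application one can either (i) approximate $\mathrm{sign}\bigl(x^2 - (1-\Delta)^2\bigr)$ directly to obtain an even polynomial with essentially the same transition geometry, or (ii) realise the filter as a linear combination of two QSVT calls carrying the even and odd parts separately. Either route preserves the asymptotic degree bound, so the content of the lemma is unaffected.
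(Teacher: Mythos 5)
Your proposal is correct and takes essentially the same approach as the paper, which does not supply a proof at all: it simply states the lemma and attributes it to Gily\'en et al.\ (Theorem~31). You make the citation explicit by spelling out the affine reparameterisation that shifts the sign-approximator's dead zone to the threshold $1-\Delta$, which is precisely the omitted bookkeeping. Two minor remarks: (i) the constraint $\delta \leq 1/2$ in your invocation is not actually met for $\Delta, t$ both near $1$, but this regime is trivial since then $\mathcal{O}(1/\delta) = \mathcal{O}(1)$ and any constant-degree filter works; (ii) the strict inequality $|f_d^{(1)}| < \alpha$ requires taking $\alpha' < 2\alpha$ rather than $\alpha' = 2\alpha$, but this is cosmetic. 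Your observation about parity is also the correct one: the paper handles it exactly as you suggest, by decomposing a non-parity filter into even and odd parts recombined via LCU (see the discussion following Lemma~\ref{lem:alg-filter-a}).
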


If the thresholded value, $\Delta$, is to be any value between 0 and 1, this is essentially the best that we can do \cite{Gily2019, EremenkoYuditskii2010}. However, the focus in the theory of Markov chains is often when $\Delta$ is small, and in particular one is often concerned about the scaling with $\Delta$ for fixed $t$. We first consider the very special case where $t = 1$. 
\begin{lemma}
    There exists a family of polynomial functions, $f^{(2)}_d$ (one polynomial for each degree, $d$) such that
    \begin{align}
\label{eqn-filter-special-case}
        0 \leq x \leq 1-2\Delta  : & \,\,\,\,\,\,\, |f_d^{(2)}(x)| < \alpha \\
        \label{eq:f2-b}
        1-2\Delta \leq x \leq 1: & \,\,\,\,\,\,\, |f_d^{(2)}(x)| \leq 1 \\
        \label{eq:f2-c}
        x = 1 : & \,\,\,\,\,\,\, \frac{1}{4} < f_d^{(2)}(x) \leq 1
\end{align}
Moreover, $d = \mathcal{O} \left( \frac{1}{\sqrt{\Delta}} \log \frac{1}{\alpha} \right)$.
\end{lemma}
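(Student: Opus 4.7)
The plan is to use a Dolph–Chebyshev kernel: a rescaled, normalised Chebyshev polynomial of the first kind. This is the classical extremal object for the task of being small on an interval $[-a,a]$ while being ``tall'' at a point $y_0>a$, and it is precisely where the improvement from $\tfrac{1}{\Delta}$ to $\tfrac{1}{\sqrt{\Delta}}$ comes from, essentially via Markov's inequality. Concretely, I would define, for an integer $d$ to be fixed later,
\[
f_d^{(2)}(x) \;:=\; \frac{T_d\!\left(x/(1-2\Delta)\right)}{T_d\!\left(1/(1-2\Delta)\right)},
\]
where $T_d$ is the degree-$d$ Chebyshev polynomial of the first kind, and take $d$ to be the smallest even integer with $d \geq \tfrac{1}{\sqrt{\Delta}}\log(2/\alpha)$.

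I would then verify the three bounds using two standard facts: (i) $|T_d(y)|\leq 1$ for $|y|\leq 1$, and (ii) $T_d(\cosh\theta)=\cosh(d\theta)$, so that $T_d$ is positive and monotone increasing on $[1,\infty)$ and grows exponentially there. For $x\in[0,1-2\Delta]$ the argument $y=x/(1-2\Delta)$ lies in $[0,1]$, so the numerator is at most $1$ and hence $|f_d^{(2)}(x)|\leq 1/T_d(1/(1-2\Delta))$. Writing $1/(1-2\Delta)=1+\eta$ with $\eta=2\Delta/(1-2\Delta)\geq 2\Delta$, the elementary estimate $\mathrm{arccosh}(1+\eta)\geq \sqrt{\eta/2}$ (valid on, say, $\eta\in[0,1/2]$) gives $T_d(1/(1-2\Delta))\geq \tfrac{1}{2}e^{d\sqrt{\Delta}}$, which by the choice of $d$ exceeds $1/\alpha$; this yields (\ref{eqn-filter-special-case}). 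For $x\in[1-2\Delta,1]$ the argument $y$ lies in the monotone regime $[1, 1/(1-2\Delta)]$, so $0\leq T_d(y)\leq T_d(1/(1-2\Delta))$, giving $0\leq f_d^{(2)}(x)\leq 1$ and hence (\ref{eq:f2-b}). Finally (\ref{eq:f2-c}) is immediate: $f_d^{(2)}(1)=1$ by construction.

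The main subtlety, and the reason I would enforce $d$ even, is that for the downstream QSVT use the polynomial must be bounded by $1$ on all of $[-1,1]$ and have definite parity. With $d$ even, $f_d^{(2)}$ is even, so the bound on $[-1,0]$ follows automatically from the bound on $[0,1]$, at the cost of at most an additive $1$ in the degree. A secondary subtlety is simply being careful with the $\mathrm{arccosh}$ estimate: if one wants explicit constants rather than big-$\mathcal{O}$, one either restricts to $\Delta$ small enough that $\eta\leq 1$ or uses the exact form $\mathrm{arccosh}(1+\eta)=\log\!\big(1+\eta+\sqrt{\eta(2+\eta)}\big)$ and splits the regime $\Delta$ close to $1/2$ off. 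Neither issue affects the stated asymptotic $d=\mathcal{O}\!\big(\tfrac{1}{\sqrt{\Delta}}\log\tfrac{1}{\alpha}\big)$.
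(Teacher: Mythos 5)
Your proof is correct, but it takes a genuinely different route from the paper's. The paper proves this lemma indirectly: it first observes that the \emph{monomial} $x^{\tilde d}$ with $\tilde d = \mathcal{O}((2\Delta)^{-1}\log(1/\alpha))$ already satisfies all three conditions, and then invokes the Sachdeva--Vishnoi result (cited as their Theorem 3.2) that $x^{\tilde d}$ can be $\tilde\epsilon$-approximated uniformly on $[-1,1]$ by a polynomial of degree $\mathcal{O}(\sqrt{\tilde d \log(1/\tilde\epsilon)})$; with $\tilde\epsilon = \mathrm{poly}(\alpha)$ this yields the claimed $\mathcal{O}(\Delta^{-1/2}\log(1/\alpha))$ degree. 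Your approach instead writes down the extremal object directly: the rescaled Chebyshev kernel $T_d(x/(1-2\Delta))/T_d(1/(1-2\Delta))$, and uses the elementary lower bound $\mathrm{arccosh}(1+\eta) \gtrsim \sqrt{\eta}$ together with $T_d(\cosh\theta)=\cosh(d\theta) \geq \tfrac12 e^{d\theta}$ to get the exponential suppression on $[0,1-2\Delta]$. What your version buys is an explicit, self-contained polynomial with no black-box citation, the value $f_d^{(2)}(1)=1$ exactly, and a construction that is visibly of the same family as the paper's $f^{(3)}$ (both are Dolph--Chebyshev windows, just with different affine rescalings of the stop band), which makes the conceptual progression from $f^{(1)}$ to $f^{(2)}$ to $f^{(3)}$ more unified. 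What the paper's version buys is a more transparent explanation of \emph{why} $\Delta^{-1/2}$ rather than $\Delta^{-1}$ should be expected (it is literally the ``square root of a monomial degree'' phenomenon), which is the intuition the authors emphasize in the surrounding prose. Your even-degree parity fix and the caveat about explicit constants in the $\mathrm{arccosh}$ estimate are both sound and handled appropriately; neither affects the asymptotic.
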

\begin{proof}
    In this case, it is well known that the monomial $x^{\tilde{d}}$ for $\tilde{d} = \mathcal{O}((2\Delta)^{-1}\log(1/\alpha))$ can achieve this. Furthermore, this monomial can be approximated to arbitrary accuracy $\tilde{\epsilon} >0$ in the range $[-1,1]$ by a polynomial $p_{d,\tilde{d}}$ of degree $d$ where $d = \mathcal{O}(\sqrt{2\tilde{d} \log(1/\tilde{\epsilon})})$. This result is a classic application of the theory of Chebychev polynomials which is described for example by Sachdeva and Vishnoi~\cite[Theorem 3.2]{sachdeva2013}. Choosing $\tilde{d}$ and $\tilde{\epsilon}$ appropriately, this results in a polynomial $f$ which satisfies all of the conditions in (\ref{eqn-filter-special-case}) -- \ref{eq:f2-c}), and has degree $\mathcal{O}(\Delta^{-1/2} \log (1/\alpha))$.  
\end{proof}

The crucial property that allows the variation with $\frac{1}{\Delta}$ to be quadratically better in the case of $f^{(2)}(x)$ relative to $f^{(1)}(x)$ is that in the former the function is growing rapidly at $x=1$, as there is no lower-bound on the value that the function must take for any $x < 1$. However, this is a little \textit{too} restrictive, and would lead to an algorithm for singular value lower-bounding, rather than thresholding. Instead, we propose the following function which utilises the fact that $\Delta$ is small in the region of interest to construct a function that \textit{is} rapidly growing at $x=1$, but still has a non-singular interval of $x$ prior to $x=1$ for which a lower-bound on $f(x)$ does hold. This ``best of both worlds'' solution deploys the Dolph-Chebyshev window \cite{Dolph1946} (which was also used in a different, but related, application in Refs.~\cite{Gily2019, Grand-Unification}) as a singular value filter:

\begin{defn}[Dolph-Chebyshev window special case]
    Let 
    \begin{equation}
    f^{(3)}_d(x) = \frac{T_d \left( \frac{ 2x - (x_0 - 1)}{x_0 +1} \right) }{2 T_d \left( \frac{ 2 - (x_0 - 1)}{x_0 +1} \right)} 
\end{equation}
where 
\begin{equation}
    T_d(z) = \begin{cases} \cos( d \arccos z) & |z| \leq 1 \\ \cosh( d \, \textnormal{arccosh} \, z) & z \geq 1 \\
    (-1)^d \cosh( d \, \textnormal{arccosh} \, (-z)) & z \leq -1
    \end{cases}
\end{equation}
which is known as the Chebyshev polynomial of the first kind \cite{Rivlin1990}.
\end{defn}
\begin{remark}
    $f^{(3)}_d(x)$ is a polynomial in $x$ of degree $d$.
\end{remark}

To prove $f^{(3)}$ has the desired behaviour, we first need to establish a simple result. The exponential growth of the Chebyshev polynomial when the argument is close to one is well-known, however it does not (as far as we can see) appear in the literature in the exact form needed. 

\begin{lemma}
\label{lem:app}
For $r > 0$
    \begin{equation}
        T_d \left( 1 + \frac{1}{r} \right) = \frac{1}{2} e^{d (\sqrt{2 / r} + \mathcal{O}(1/r))} \left( 1 + o(1) \right)
    \end{equation}
\end{lemma}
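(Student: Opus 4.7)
The plan is to proceed directly from the definition of the Chebyshev polynomial of the first kind in the regime $z \geq 1$, namely $T_d(z) = \cosh(d\,\operatorname{arccosh} z)$. Substituting $z = 1 + 1/r$ reduces the problem to two essentially independent estimates: a Taylor-type expansion of $\operatorname{arccosh}(1 + 1/r)$ for small $1/r$, and an expansion of $\cosh$ for large argument.

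First I would handle the $\operatorname{arccosh}$ expansion. Using $\operatorname{arccosh}(z) = \ln\bigl(z + \sqrt{z^2 - 1}\bigr)$, plug in $z = 1 + 1/r$ and observe that $z^2 - 1 = (2/r)(1 + 1/(2r))$, so $\sqrt{z^2 - 1} = \sqrt{2/r}\,(1 + \mathcal{O}(1/r))$. Hence
\begin{equation*}
z + \sqrt{z^2 - 1} = 1 + \sqrt{2/r} + \mathcal{O}(1/r),
\end{equation*}
and taking logarithms via $\ln(1+w) = w - w^2/2 + \mathcal{O}(w^3)$ with $w = \sqrt{2/r} + \mathcal{O}(1/r)$ gives the desired expansion
\begin{equation*}
\operatorname{arccosh}(1 + 1/r) = \sqrt{2/r} + \mathcal{O}(1/r).
\end{equation*}
The one subtlety here is that $\sqrt{z^2-1}$ is not analytic at $z = 1$, so one must keep track of half-integer powers of $1/r$; the key observation is that the $w^2/2$ term exactly cancels the leading $1/r$ contribution inside the logarithm, which is what prevents the error from being worse than $\mathcal{O}(1/r)$.

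Next I would combine with the hyperbolic cosine. Writing $\cosh(y) = \tfrac{1}{2}e^y(1 + e^{-2y})$ and setting $y = d\,\operatorname{arccosh}(1 + 1/r) = d\bigl(\sqrt{2/r} + \mathcal{O}(1/r)\bigr)$, the exponential factor is exactly the claimed $\tfrac{1}{2}e^{d(\sqrt{2/r} + \mathcal{O}(1/r))}$, while the parenthesised correction $1 + e^{-2y}$ provides the $(1 + o(1))$ factor in the stated limit regime where $d\sqrt{2/r} \to \infty$ (which is the regime of interest for the filter application, since $d$ will be chosen to make this quantity large).

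I expect the main obstacle to be purely bookkeeping: ensuring the half-integer and integer powers of $1/r$ are tracked carefully enough in the expansion of $\operatorname{arccosh}(1+1/r)$ so that the claimed error term $\mathcal{O}(1/r)$ inside the exponent is obtained, rather than a weaker $\mathcal{O}(1/\sqrt{r})$ that would corrupt the leading $\sqrt{2/r}$ term. No deep ideas are needed beyond this; once the exponent is pinned down, the conversion from $\cosh$ to $\tfrac{1}{2}e^{(\cdot)}(1+o(1))$ is immediate.
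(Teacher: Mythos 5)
Your proof is correct and takes essentially the same route as the paper's: expand $\operatorname{arccosh}(1+1/r)$ via the logarithmic formula $\operatorname{arccosh}(z)=\ln(z+\sqrt{z^2-1})$, then substitute into $\cosh$. One small point of correction to your commentary: the cancellation you flag as the ``key observation'' (between $-w^2/2$ and the $1/r$ piece of $w$) is not needed to reach the claimed $\mathcal{O}(1/r)$ bound, since both of those contributions are already individually $\mathcal{O}(1/r)$ when $w = \sqrt{2/r} + \mathcal{O}(1/r)$; the cancellation merely tightens the true error to $\mathcal{O}(r^{-3/2})$, which the lemma does not require.
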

\begin{proof}
By definition, $T_d \left( 1 + \frac{1}{r} \right) = \cosh \left( d \, \text{arccosh}  \, \left( 1 + \frac{1}{r} \right) \right)$, and first we focus on $\text{arccosh}  \, \left( 1 + \frac{1}{r} \right)$. By a the definition and a series of Taylor expansions,
\begin{align}
    \text{arccosh}  \, \left( 1 + \frac{1}{r} \right) & = \log_e \left( 1 + \frac{1}{r} + \sqrt{\left( 1 + \frac{1}{r} \right)^2 -1} \right) \nonumber \\
    & = \log_e \left( 1 + \frac{1}{r} + \sqrt{\frac{2}{r}} \sqrt{1 + \frac{1}{2r}} \right) \\ 
    & = \log_e \left( 1 + \frac{1}{r} + \sqrt{\frac{2}{r}} \left( 1 + \mathcal{O}\left( \frac{1}{r} \right) \right) \right) \\
    \label{eq:app-1}
    & = \sqrt{\frac{2}{r}} + \mathcal{O}\left( \frac{1}{r}  \right) 
\end{align}
We also have that by definition,
\begin{equation}
\label{eq:app-2}
    \cosh (d x) = \frac{ e^{d x} + e^{-d x}}{2} 
\end{equation}
The claim follows directly by substituting the right-hand side of (\ref{eq:app-1}) for $x$ in (\ref{eq:app-2}).
\end{proof}

We are now ready to show:

\begin{lemma}
    \label{lem:alg-filter-a}
For every fixed $0 < \alpha < \frac{1}{4}$, there exists $t^*, \, 0 < t^* < 1$ such that $\forall t \geq t^*$ and for $d \in \Theta \left( \frac{1}{\sqrt{\Delta}} \right)$, $x_0$ and $x_1$ can be set such that $-1 \leq x \leq 1, \, |f_d^{(3)}(x)| \leq 1$ and
    \begin{align}
          0 \leq x \leq 1 - (1+t)\Delta : & \,\,\,\,\,\,\, |f_d^{(3)}| \leq \alpha \\
        1 - (1-t)\Delta  \leq x \leq 1 :  & \,\,\,\,\,\,\, \frac{1}{4} \leq f_d^{(3)}
    \end{align}
\end{lemma}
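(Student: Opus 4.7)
My plan would be to tune the single free parameter $x_0$ so that the affine map $z(x) := (2x - (x_0 - 1))/(x_0 + 1)$ inside $T_d$ carries the stopband $[-1, 1 - (1+t)\Delta]$ bijectively onto $[-1, 1]$, which is achieved by setting $x_0 = 1 - (1+t)\Delta$ (so $z(-1) = -1$ and $z(x_0) = 1$). Since $|T_d| \leq 1$ on $[-1,1]$, the numerator of $f^{(3)}_d(x)$ is then automatically controlled throughout the stopband, and all the work reduces to showing that the denominator $2 T_d(z(1))$, which captures the exponential growth of $T_d$ past $z = 1$, is large enough to suppress $|f^{(3)}_d|$ below $\alpha$. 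The overall bound $|f^{(3)}_d(x)| \leq 1$ on $[-1,1]$ will come essentially for free: on $[-1, x_0]$ from $|T_d(z(x))| \leq 1$, and on $(x_0, 1]$ from monotonicity of $T_d$ on $[1, \infty)$ giving $T_d(z(x)) \leq T_d(z(1))$, so in both regions $|f^{(3)}_d(x)| \leq 1/2$.

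For the two quantitative bounds I would invoke Lemma~\ref{lem:app} twice. A direct calculation gives $z(1) - 1 = 2(1+t)\Delta/(2 - (1+t)\Delta)$, and applying Lemma~\ref{lem:app} with $1/r$ equal to this quantity yields $T_d(z(1)) = \tfrac{1}{2} e^{d\sqrt{2(1+t)\Delta}\,(1 + o(1))}$; requiring $1/(2 T_d(z(1))) \leq \alpha$ then forces $d \geq (1+o(1)) \log(1/\alpha)/\sqrt{2(1+t)\Delta}$. For the passband, monotonicity of $T_d$ on $[1, \infty)$ reduces the claim to the worst case $x = 1 - (1-t)\Delta$, where the same affine calculation gives $z(x) - 1 = 4t\Delta/(2 - (1+t)\Delta)$; a second application of Lemma~\ref{lem:app} yields $T_d(z(x)) \geq \tfrac{1}{2} e^{d\sqrt{4t\Delta}\,(1 - o(1))}$. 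Taking the ratio and requiring $f^{(3)}_d(x) \geq 1/4$ produces the competing constraint $d \sqrt{\Delta}\,(\sqrt{2(1+t)} - \sqrt{4t}) \leq \log 2 - o(1)$.

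To close the argument I would set $d = c/\sqrt{\Delta}$ and exhibit an admissible $c$: the stopband demands $c \geq c_{\min}(\alpha, t) := \log(1/\alpha)/\sqrt{2(1+t)}$ while the passband demands $c \leq c_{\max}(t) := \log(2)/(\sqrt{2(1+t)} - \sqrt{4t})$. The crucial observation is that $c_{\max}(t) \to \infty$ as $t \to 1$ whereas $c_{\min}(\alpha, t)$ stays bounded, so for any fixed $\alpha < 1/4$ one can choose $t^* < 1$ large enough that the admissible window $[c_{\min}, c_{\max}]$ is nonempty for all $t \geq t^*$; any $c$ in this window gives the claimed $d \in \Theta(1/\sqrt{\Delta})$. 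The main obstacle I anticipate is not conceptual but technical: Lemma~\ref{lem:app} is an asymptotic statement as $r \to \infty$, so its $o(1)$ terms implicitly require $\Delta$ to be sufficiently small. I would therefore need to make the $\mathcal{O}(1/r)$ bound inside the exponent explicit (or at least fix a threshold $\Delta_0 > 0$ below which the approximations hold to sufficient accuracy) so that the gap $c_{\max}(t) - c_{\min}(\alpha, t)$ survives after absorbing those error terms uniformly in $\Delta \leq \Delta_0$.
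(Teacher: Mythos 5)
Your proposal follows the same route as the paper's proof: set $x_0 = 1 - (1+t)\Delta$, bound the numerator by $1$ on the stopband so the problem reduces to lower-bounding the denominator, apply Lemma~\ref{lem:app} twice (once at $x=1$ for the stopband condition and once at $x = 1-(1-t)\Delta$ for the passband condition), and then show the two resulting constraints on $d\sqrt{\Delta}$ are simultaneously satisfiable once $t$ exceeds a threshold depending on $\alpha$. Your explicit $c_{\min}/c_{\max}$ bookkeeping and your closing caveat about the asymptotic nature of Lemma~\ref{lem:app} (requiring $\Delta$ small) correspond directly to the paper's slack constants $k,\tilde{k}$ and its ``for any $\Delta < \Delta^*$'' hedge, so this is essentially the same proof.
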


\begin{proof}
     Set $x_0 = 1 - (1+t)\Delta$ and $x_1 = 1 - (1-t)\Delta$, then $f_d^{(3)}$ has three properties when $d$, the degree, is as stated in the lemma:
\begin{enumerate}[(i)]
    \item $f_d^{(3)}(1) = \frac{1}{2}$.
    \item $f_d^{(3)}(x)$ is monotonically increasing in the region $1 - (1+t) \Delta \leq x \leq 1$.
    \item There exists there exists $t^*, \, 0 < t^* < 1$ such that $\forall t \geq t^*$, the following can simulataneously be met: 
    \begin{align}
          0 \leq x \leq 1 - (1+t)\Delta : & \,\,\,\,\,\,\, |f_d^{(3)}| < \alpha \\
        1 - (1-t)\Delta  \leq x \leq 1 :  & \,\,\,\,\,\,\, \frac{1}{4} \leq f_d^{(3)}
    \end{align}
\end{enumerate}
Together these three are enough to conclude, and the first claim can be checked trivially from the definition, whilst the second follows from a standard property of the Dolph-Chebyshev filter: the denominator is a positive constant, whilst the argument of the numerator is increasing and lower-bounded by one -- and it is known that $T_d$ is monotonic in arguments greater than one \cite[Section 18.3.4]{NIST:DLMF}. Thus, to complete the proof, it is enough to show that the final claim holds.\\

For the first condition in claim (iii), in the region $ 0 \leq x \leq 1 - (1+t)\Delta$, the magnitude of the argument of $T_d$ in the numerator is bounded above by one, and hence the numerator is itself bounded above by one. So it is enough to show that
\begin{equation}
    T_d \left( \frac{ 2 - (x_0 - 1)}{x_0 +1} \right) > \alpha
\end{equation}
To analyse the argument, it is helpful to define $D = \frac{1}{\Delta}$, such that we have:
\begin{equation}
    \frac{ 2 - (x_0 - 1)}{x_0 +1}  = \frac{ 2 - ( 1 - (1+t)\Delta - 1)}{ 1 - (1+t)\Delta +1} = \frac{ 1 + (1+t)\Delta /2 }{ 1 - (1+t)\Delta /2 } = 1 + \frac{1+t}{D} + \mathcal{O}\left(\frac{1}{D^2} \right) 
\end{equation}
We also have for $r > 0$, the approximation Chebyshev polynomial established in Lemma~\ref{lem:app},
\begin{equation}
    T_d\left( 1+ \frac{1}{r} \right) = \frac{1}{2} e^{d (\sqrt{2 / r}+ \mathcal{O}(1/r)) } (1 + o (1 ) )
\end{equation}
Plugging in $\frac{1}{r} = \frac{1 + t}{D} + \mathcal{O}\left(\frac{1}{D^2} \right)$, we can see that for \textit{any} (fixed) $t$ and for any $k>0$, there exists $\Delta^*$ such that for any $\Delta < \Delta*$, then it is enough to choose any $n$ satisfying
\begin{equation}
\label{cond1}
    d > \frac{1+k}{\sqrt{2(1+t) \Delta}} \log \frac{2}{\alpha}
\end{equation}

Now turning to the second requirement, we know that the function is monotonically increasing for all $ x \geq 1 - (1-t) \Delta$, and so it is enough to consider the value of $f_d^{(3)}$ at $x = 1 - (1-t) \Delta$,
\begin{equation}
    f_d^{(3)}(x_1) = \frac{T_d \left( \frac{ 2x_1 - (x_0 - 1)}{x_0 +1} \right) }{2T_d \left( \frac{ 2 - (x_0 - 1)}{x_0 +1} \right)}  \geq \frac{1}{4}
\end{equation}
where we have already addressed the denominator, so turning to the argument of the numerator,
\begin{equation}
     \frac{ 2x_1 - (x_0 - 1)}{x_0 +1} = 1 + \frac{4 t \Delta}{2 - (1+t) \Delta} = 1 + \frac{2t}{D} + \mathcal{O} \left( \left( \frac{1}{D} \right)^2 \right)
\end{equation}
Putting this together and using the same expansion of the Chebyshev polynomial, for \textit{any} $t$ there exists $\Delta^*$ such that it is enough that for any $\tilde{k} > 0$, 
\begin{align}
    \frac{ 1 - \tilde{k}}{2} e^{d \sqrt{4 t \Delta} - d \sqrt{2(1+t) \Delta}} & \geq \frac{1}{4} \\
    \implies d \sqrt{\Delta} ( \sqrt{2 (1+t)} - 2 \sqrt{t} ) & \geq \log \frac{1}{2(1 - \tilde{k})} \\ 
    \label{cond2}
    \implies d \leq \frac{ \log (2(1 - \tilde{k}))}{ \sqrt{ \Delta}(\sqrt{2 (1+t)} - \sqrt{4t})}
\end{align}
We now turn to the conditions necessary for both (\ref{cond1}) and (\ref{cond2}) to hold, i.e., 
\begin{equation}
   \frac{1+k}{\sqrt{2(1+t) \Delta}} \log \frac{2}{\alpha}  < \frac{ \log (2(1 - \tilde{k}))}{ \sqrt{ \Delta}(\sqrt{2 (1+t)} - \sqrt{4t})}
\end{equation}
which can be rearranged to:
\begin{equation}
\label{eq:t*-1}
    t > t^* = \frac{C^2}{1 - C^2}
\end{equation}
where 
\begin{equation}
\label{eq:t*-2}
    C = \frac{1}{\sqrt{2}} - \frac{\log (2 (1- \tilde{k}))}{(1+k) \sqrt{2} \log \frac{2}{\alpha}}
\end{equation}
Which has the expected behaviour that $t^* \uparrow 1$ as $\alpha \to 0$. Thus for any fixed $\alpha$, as claimed we can find $t^*$, and guarantee the desired behaviour, and moreover by the above $d \in \Theta \left( \frac{1}{\sqrt{\Delta}} \right)$. 
\end{proof}

\begin{figure}[t!]
    \centering
    
    \begin{subfigure}{0.32\textwidth}
        \centering
        \includegraphics[width=\linewidth]{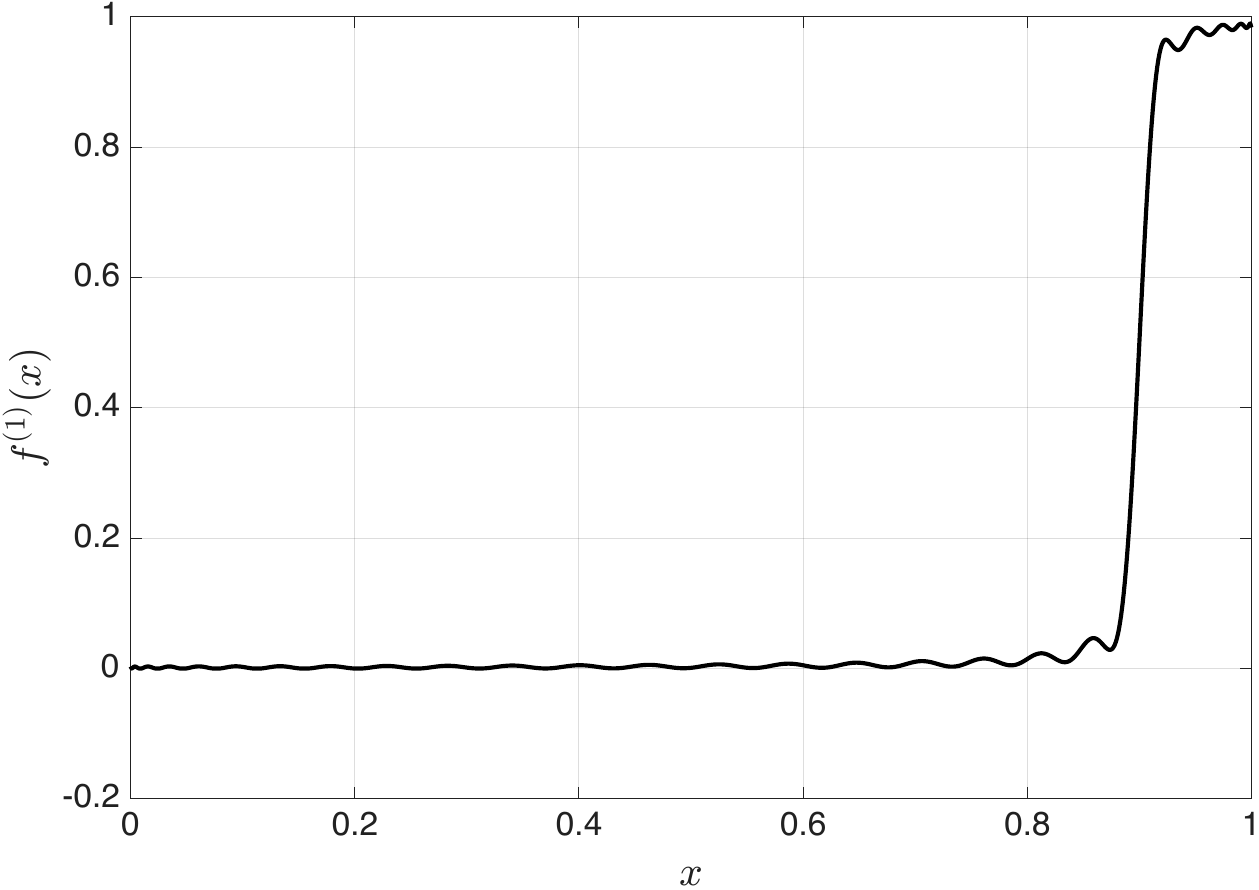}
        \caption{$f^{(1)}(x)$}
    \end{subfigure}
    \hfill
    \begin{subfigure}{0.32\textwidth}
        \centering
        \includegraphics[width=\linewidth]{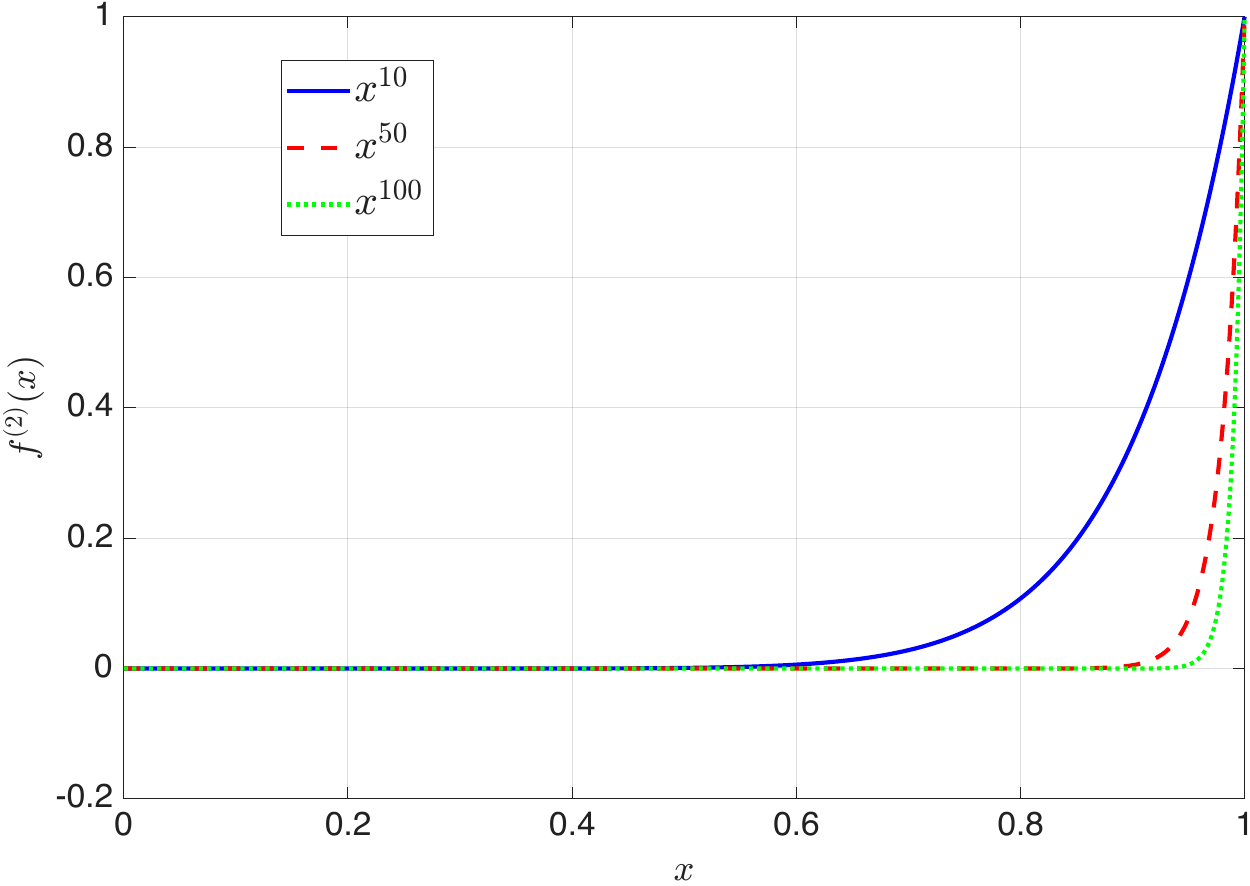}
        \caption{$f^{(2)}(x)$}
    \end{subfigure}
    \hfill
    \begin{subfigure}{0.32\textwidth}
        \centering
        \includegraphics[width=\linewidth]{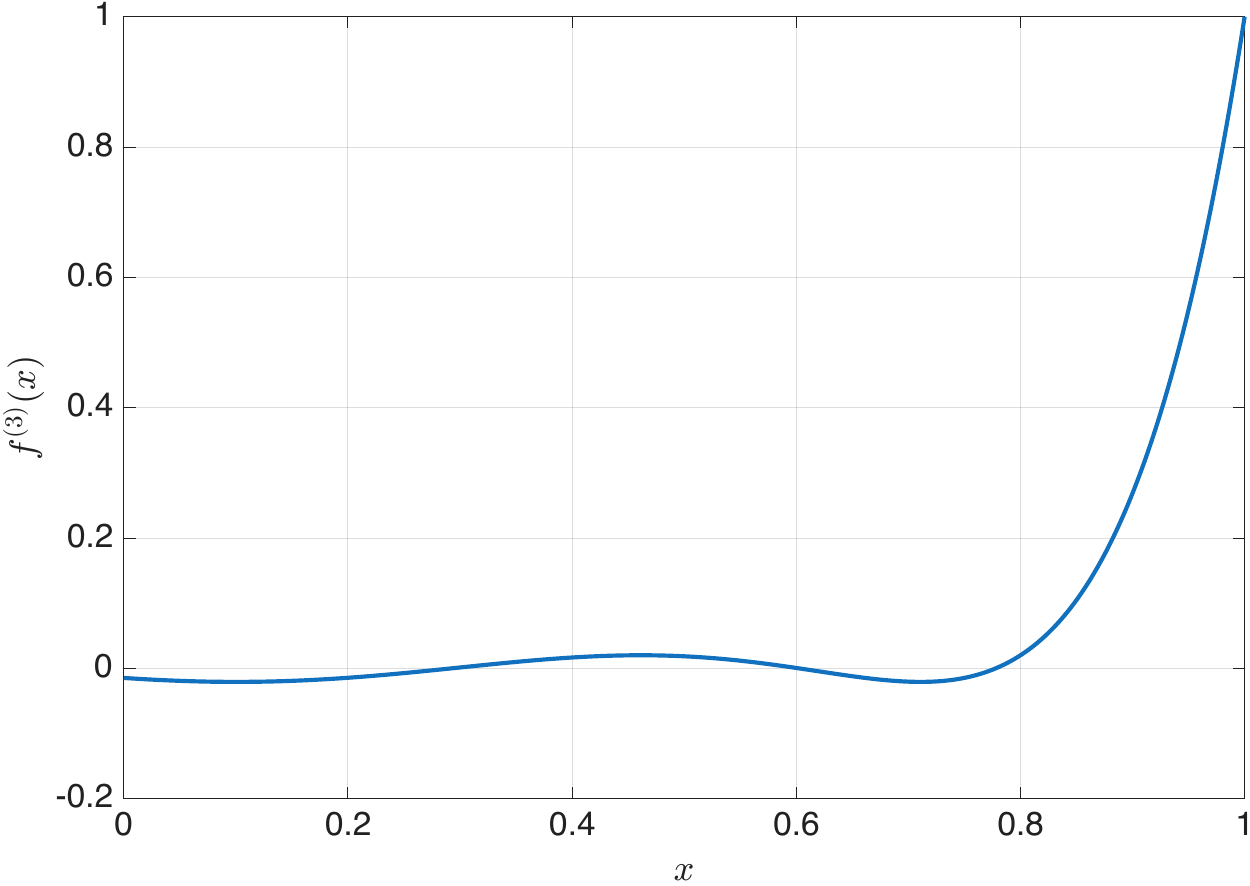}
        \caption{$f^{(3)}(x)$}
    \end{subfigure}

    \caption{A comparison of the three polynomial filters. In (a), the step function approximation is limited in how fast it can rise as it must ``turn flat again''; conversely there is no such limitation in (b), where the monomial (approximated by a polynomial with degree about square root of the monomial degree) grows rapidly at $x=1$. In the case of (c), the function grows slowly enough that there is a region of $x$ with a useful lower-bound (the attractive feature of (a)), whilst still requiring relatively low degree (the attractive feature of (b)).}
    \label{fig:chebyfig}
\end{figure}

The way that the three proposed functions, $f^{(1)}$, $f^{(2)}$ and $f^{(3)}$ achieve singular value filtering is sketched out in Fig.~\ref{fig:chebyfig}.

It is worth noting that $f^{(3)}$ is neither purely odd nor purely even, however every real function can be decomposed as a sum of an odd part plus an even part (with degree of each no greater than that of the original polynomial). Moreover, as $|f^{(3)}(x)| \leq \frac{1}{2}$ for $x \in [-1,1]$, the magnitudes of each of these parts is itself upper-bounded by 1 in the same region of $x$. Therefore each can be individually implemented using the QSVT, and the two parts then recombined using a linear combination of unitaries (LCU), as sketched in Ref.~\cite[Section VI-A]{Grand-Unification}. An elegant way of thinking about LCU for this purpose is an extra wrapping of block-encoding, however for the analysis it is also convenient to think of the LCU as preparing a post-selected state, which therefore fails with a certain probability in practice. In this case, there are two terms being linearly combined, and so by simple analysis the maximum failure probability is $\frac{1}{2}$. This failure is subsumed by another failure mode in the final algorithm, as briefly mentioned below. An alternative would be to use \textit{generalised quantum signal processing} \cite{motlagh2024generalized} to bypass the parity requirement, although this would require the block-encoding to be in a slightly different form to that which we use here.

\subsubsection*{Quantum counting}

The final component is to ``weigh'' the encoded block once the singular value filter has been applied, to decide if the second largest singular vector has been filtered out or not.

\begin{lemma}[Approximate quantum counting]
\label{lem-qcounting}
    Given an $n$-qubit quantum state:
    \begin{equation}
    \label{eq-lem-qcounting}
        \ket{\psi} = a \ket{0} \ket{\Psi_0} + \sqrt{\frac{a^2-1}{a^2}} \ket{1} \ket{\Psi_1}
    \end{equation}
    where $a$ is real and positive, which is prepared by a circuit, $A$, applied to the state $\ket{0^n}$; then defining $N=2^n$ the cases $a < c_l / \sqrt{N}$ and $a \geq c_u/\sqrt{N}$ for \textbf{fixed} real positive numbers $c_l < c_u$ can be distinguished (in the sense that $0$ should be output if $a < c_l/\sqrt{N}$ and $1$ output if $a \geq c_u/\sqrt{N}$ (for $c_l/\sqrt{N} \leq a < c_u/\sqrt{N}$, either 0 or 1 may be output); with quantum circuitry requiring $\Theta\left(\sqrt{N}\right)$ uses of $A$.
\end{lemma}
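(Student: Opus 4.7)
The plan is to reduce this to quantum phase estimation on a Grover-type operator built from $A$. Define the reflections $S_0 = I - 2\ket{0^n}\bra{0^n}$ and $S_M = I - 2(\ket{0}\bra{0} \otimes I_{n-1})$, and form the Grover iterate $Q = -A\, S_0\, A^\dagger\, S_M$. On the two-dimensional invariant subspace containing $A\ket{0^n} = \ket{\psi}$, the operator $Q$ acts as a rotation with eigenvalues $e^{\pm 2i\theta}$ where $\sin \theta = a$ (this is the standard Brassard--H\o yer--Mosca--Tapp amplitude-estimation setup; I am reading the coefficient of the second term of $\ket{\psi}$ as $\sqrt{1-a^2}$, the only value consistent with $\ket{\psi}$ being normalised). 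Each application of $Q$ uses $A$ and $A^\dagger$ once.

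Next, I would run QPE on $Q$ with the input state $\ket{\psi} = A\ket{0^n}$ using $t$ precision qubits, with $2^t = C\sqrt{N}$ for a constant $C$ to be chosen. The usual QPE analysis shows that with constant probability $\geq 8/\pi^2$ the returned estimate $\tilde{\theta}$ satisfies $|\tilde{\theta} - \theta| \leq 2\pi / 2^t = O(1/\sqrt{N})$, and the number of controlled-$Q$ applications is $O(2^t) = O(\sqrt{N})$, hence $O(\sqrt{N})$ uses of $A$. Since $a \in [0,1]$ forces $\theta \in [0,\pi/2]$ and $|\sin\theta_1 - \sin\theta_2| \leq |\theta_1 - \theta_2|$, one obtains an estimate $\tilde{a} = |\sin\tilde{\theta}|$ of $a$ with the same $O(1/\sqrt{N})$ additive error (the $\pm 2\theta$ sign ambiguity is absorbed by taking $|\sin\cdot|$, since $\sin(-\theta) = -\sin\theta$).

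I would then output $1$ if $\tilde{a} \geq (c_l + c_u)/(2\sqrt{N})$ and $0$ otherwise. By choosing $C$ large enough (depending only on the fixed gap $c_u - c_l$), the error in $\tilde{a}$ is strictly smaller than $(c_u - c_l)/(2\sqrt{N})$, so whenever QPE succeeds the decision is correct in both the $a < c_l/\sqrt{N}$ and $a \geq c_u/\sqrt{N}$ regimes. The constant QPE success probability can be boosted to any desired $1-\delta$ by $O(\log(1/\delta))$ independent repetitions and a majority vote (or by using the median trick), preserving the $\Theta(\sqrt{N})$ query complexity up to a constant.

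The main obstacle, and the part that requires the most care, is managing the relationship between the additive error in $\theta$ and in $a$ near the boundary $\theta = \pi/2$ (i.e.\ when $a$ is close to $1$), together with the phase-aliasing of QPE near $\theta = 0$ which could in principle return $\tilde\theta$ wrapped around from just below $2\pi$. Both are standard features of amplitude estimation: the regime of interest in this lemma is $a = \Theta(1/\sqrt{N})$ so $\theta$ is small and well away from $\pi/2$, and the $2\pi$-wrap-around is either eliminated by using the standard ``sine-of-estimate'' post-processing (as in BHMT) or absorbed into the failure probability and removed by the majority-vote boosting. With these standard fixes the claimed $\Theta(\sqrt{N})$ scaling follows.
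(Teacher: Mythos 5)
Your proposal is correct and takes essentially the same approach as the paper: the paper's ``proof'' of this lemma is simply the remark that ``this is a straightforward application of quantum counting'' together with citations to the standard amplitude-estimation references (Brassard--H\o yer--Mosca--Tapp, Brassard--H\o yer--Tapp, Aaronson--Rall), and your argument is exactly the BHMT phase-estimation-on-the-Grover-iterate construction those citations refer to, fleshed out with the details (including the correct reading of the evidently mistyped normalisation coefficient as $\sqrt{1-a^2}$ and the boosting to arbitrary failure probability).
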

\noindent This is a straightforward application of quantum counting \cite{Qcount1, Qcount2, Qcount3}. It is convenient to define this as a formal subroutine that can be called:

\begin{defn}
    As proven to exist in Lemma~\ref{lem-qcounting}, let \textnormal{\texttt{QCount}}$(\ket{\psi}, c_l, c_u)$ be an algorithm that decides if the weight of an encoded block (i.e., the value of $a^2$ in (\ref{eq-lem-qcounting})) is at most $c_l^2 / N$ (output 0) or at least $c_u^2 / N$ (output 1). For other values either 0 or 1 may be output.
\end{defn}

\subsubsection*{Second largest eigenvalue thresholding}

With the three components, above, established, we return to the headline claim of this section. In particular, using \texttt{QCount} we propose Algorithm~\ref{alg-eig-disc}, which in turn gives the main result of this subsection. Note that the following result relies on the fact that the block-encoded matrix has maximum singular value equal to one -- which always holds for the symmetrised discriminant, and holds for unscaled block-encoding $P$ (as such $P$ is necessarily doubly-stochastic, as in Lemma~\ref{lem:when-block-encodable}).

\begin{algorithm}[!t]
\caption{Second largest singular value thresholding}\label{alg-eig-disc}
\begin{algorithmic}[1]
\Require $\tilde{U}_A$, $c_l$ and $c_u$ such that $c_l < c_u < 1$, $I$
\State $T = 0$
\For{$i = 1:I$}
\State Generate a $n$-qubit state, $\ket{\phi} = U\ket{0^n}$ from where $U \sim \mathcal{U}$.
\State $\ket{\psi} = \tilde{U}_A (\ket{0} \ket{\phi})$
\If{$\mathtt{QCount}(\ket{\psi}, c_l, c_u) =1$}
\State $T \gets 1$
\EndIf
\EndFor
\State \Return $T$
\end{algorithmic}
\end{algorithm}

\begin{thm}
\label{thm:singular-threshold}
    Let $U_A$ be a unitary block-encoding where $A$ is $N$-dimensional and either a Markov chain symmetrised discriminant, or the transition matrix of a doubly-stochastic Markov chain. There exist $\tilde{U}_A$, $c_l$, $c_u$ and $I$ which allow Algorithm~\ref{alg-eig-disc} to achieve the following for any $0 < L < 1$ and $\epsilon \leq \min (L, 1 - L)$ with failure probability any $0 < p_f$:
    \begin{itemize}
        \item outputs 0 if the second largest singular value of $A$ is smaller than $1 - L(1 + \epsilon)$;
        \item outputs 1 if the second largest singular value of $A$ is greater than $1 - L(1 - \epsilon)$;
    \end{itemize}
    Moreover, the overall number of uses of $U_A$ is:
    \begin{enumerate}[(i)]
    \item $\mathcal{O} \left(  \frac{\sqrt{N}}{\epsilon L } \log N  \log \frac{1}{p_f} \right) $, for arbitrary $\epsilon$.
    \item $\mathcal{O} \left(  \sqrt{\frac{N}{L} } \log N  \log \frac{1}{p_f} \right) $, for $\epsilon > \epsilon^*$ where $\epsilon^* > 0$ is a critical value that depends on $N$.
    \end{enumerate}
\end{thm}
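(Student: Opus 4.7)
The plan is to build $\tilde{U}_A$ by applying the QSVT to $U_A$ using one of the singular-value filter polynomials from the preceding section (namely $f^{(1)}$ for part (i) and $f^{(3)}$ for part (ii)). Since neither filter is purely even or purely odd, I would decompose $f$ into odd and even parts, apply QSVT to each separately, and recombine via a linear combination of unitaries, as sketched in the discussion preceding the theorem. The upshot is that the top-left block of $\tilde{U}_A$ is essentially $f^{SV}(A) = \sum_i f(\sigma_i)\ket{u_i}\bra{v_i}$ (modulo LCU post-selection, whose failure probability will be folded into the overall budget), so that
\begin{equation}
\bigl\|(\bra{0}\otimes I)\, \tilde{U}_A\, \ket{0}\ket{\phi}\bigr\|^2 = \sum_i f(\sigma_i)^2\, |\braket{v_i | \phi}|^2.
\end{equation}

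Next I would do a two-case analysis, using the 2-design concentration from Remark~\ref{rem:two-design} to ensure that each $|\braket{v_i|\phi}|^2$ lies within $(1 \pm k)/N$ with probability at least $p$, for any preselected constants $k$ and $p$ (taking $N$ large enough). In Case 1 ($\sigma_2 < 1 - L(1+\epsilon)$), the filter bounds force $f(\sigma_i)^2 \leq \alpha^2$ for every $i \geq 2$, so on the concentration event the block weight is at most $f(1)^2(1+k)/N + \alpha^2$. In Case 2 ($\sigma_2 > 1 - L(1-\epsilon)$), both $f(\sigma_1)$ and $f(\sigma_2)$ exceed $1/4$, and the weight is at least $(f(1)^2 + 1/16)(1-k)/N$. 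Setting $\alpha = \delta/\sqrt{N}$ for a small constant $\delta$ makes $\alpha^2 = O(1/N)$, so the two bounds are separated by a constant multiple of $1/N$. Picking constants $c_l < c_u$ strictly inside this gap and feeding them into \texttt{QCount} distinguishes the two cases using $\Theta(\sqrt{N})$ queries to $\tilde{U}_A$.

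For the outer loop, taking the OR across the $I$ iterations amplifies a per-iteration constant success probability in Case 2 (which arises from combining the concentration event with the \texttt{QCount} guarantee) while a union bound controls the Case 1 false-positive rate. Choosing $I = \Theta(\log(1/p_f))$ and amplifying both \texttt{QCount} and the concentration event to per-iteration failure $O(p_f / \log(1/p_f))$ gives overall failure probability at most $p_f$; the small doubly-logarithmic overhead from this amplification is subsumed into the stated bounds. The total query count to $U_A$ is then $O(\sqrt{N}\, d\, \log(1/p_f))$, where $d$ is the filter degree: for part (i), Lemma~\ref{lem:alg-filter-sign} with $t = \epsilon$, $\Delta = L$, and $\alpha = \Theta(1/\sqrt{N})$ gives $d = O((L\epsilon)^{-1} \log N)$; for part (ii), Lemma~\ref{lem:alg-filter-a} gives $d = O(L^{-1/2} \log N)$, valid whenever $\epsilon > \epsilon^* = t^*$, which by (\ref{eq:t*-1})--(\ref{eq:t*-2}) is a function of $\alpha$ alone, and hence of $N$ alone via the scaling $\alpha = \Theta(1/\sqrt{N})$.

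The main obstacle is keeping all the parameters compatible simultaneously: $\alpha$ must be small enough ($\alpha = O(1/\sqrt{N})$) that the residual Case 1 contribution $\alpha^2$ stays below the $1/N$ discrimination threshold, yet $\log(1/\alpha)$ must remain $O(\log N)$ so the filter degree only inherits a $\log N$ factor; the separation $c_u - c_l$ must absorb both the concentration slack $k/N$ and the \texttt{QCount} accuracy window; and in part (ii) the $N$-dependence of $\epsilon^*$ must be tracked carefully through (\ref{eq:t*-1})--(\ref{eq:t*-2}) to verify that $\epsilon^*$ depends only on $N$ and not spuriously on $L$ or $\epsilon$. A secondary, more bookkeeping-heavy obstacle is showing that the LCU post-selection failure, which can be as large as $1/2$ per call, can be amplified away within the same query budget without disturbing the quasi-optimality.
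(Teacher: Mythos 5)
Your proposal is correct and takes essentially the same route as the paper's proof: construct $\tilde U_A$ via QSVT with $f^{(1)}$ or $f^{(3)}$ (handling parity via odd/even decomposition and LCU), use the 2-design concentration from Remark~\ref{rem:two-design} to bound the overlaps with $u_1, u_2$, bound the filtered residual, separate the two cases by an $\Omega(1/N)$ gap, and distinguish with \texttt{QCount} at cost $\Theta(\sqrt N)$, repeating $I = \Theta(\log(1/p_f))$ times. The only differences are minor bookkeeping: you set $\alpha = \Theta(1/\sqrt N)$ whereas the paper sets $\alpha = \Theta(1/N)$, but both yield a residual of order $1/N$ and $\log(1/\alpha) = \Theta(\log N)$, so the final complexity is identical; and you are somewhat more explicit than the paper about why the OR structure of Algorithm~\ref{alg-eig-disc} requires amplifying the per-iteration false-positive rate, which is a slight strengthening of the paper's looser statement rather than a divergence from it.
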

\begin{proof}
    Let $\{\Ket{u_i} \}_{i=1}^N$ be the right singular vectors of $A$ expressed as quantum states, and such that $\sigma_i$ is the singular value corresponding to $\ket{u_i}$ which are ordered $\sigma_1 = 1 \geq \sigma_2 \geq \dots \geq \sigma_N \geq 0$. Further let $\tilde{U}_A$ be $U_A$ passed through the singular value filter of Lemma~\ref{lem:alg-filter-sign} or \ref{lem:alg-filter-a} (as appropriate) using the QSVT such that $\alpha = \frac{\tilde{c}_l}{N}$ for some constant $\tilde{c}_l < \frac{1}{4}$; also $\Delta = L$ and $t = \epsilon$. This guarantees that if every singular value apart from $\sigma_1$ is at most $1 - (L+\epsilon)$ then the weight of the encoded block is at most 
    \begin{equation}
        N (\tilde{c}_l / N)^2 + f(1)^2 |\braket{u_1 | \psi }|^2 = \frac{\tilde{c}_l^2 + f(1)^2 + \epsilon_{\psi}}{N}
    \end{equation}
    where $\epsilon_{\psi}$ captures the fact that $|\braket{u_1 | \psi }|^2$ is not exactly equal to $\frac{1}{N}$.

    On the other hand, if the second largest singular value has not been filtered out, then the weight of the block is at least 
    \begin{equation}
        (N-2) (\tilde{c}_l / N)^2 + \left(\frac{1}{4} \right)^2 |\braket{u_2 | \psi }|^2 + f(1)^2 |\braket{u_1 | \psi }|^2 = \frac{\tilde{c}_l^2 + f(1)^2 + \left(\frac{1}{4} \right)^2 + \epsilon'_{\psi}}{N} - 2 \frac{\tilde{c}_l^2 }{N^2}
    \end{equation}
    Noting that $f(1)$ is a fixed, known value that is itself lower-bounded by $\alpha$ then as per Remark~\ref{rem:two-design}, for sufficiently large $N$, and for any probability $p$, $0 < p < 1$, the values of $\epsilon_\psi$ and $\epsilon'_\psi$ are such that these two possibilities are non-overlapping with probability $p$. That is, $\left(\frac{1}{4} \right)^2 + \epsilon'_\psi$ exceeds $\epsilon_\psi$ by an amount that is independent of $N$ and lower-bounded by some constant. Thus quantum counting, Lemma~\ref{lem-qcounting}, can distinguish therebetween.

    Supposing that the sampled $\ket{\phi}$ indeed satisfies this, then the number of uses of $U_A$ is proportional to the degree of the polynomial transformation used in QSVT times the number of uses needed for quantum counting, that is, in the general case, when the filter of Lemma~\ref{lem:alg-filter-sign} is used
    \begin{equation}
    \label{eq:101}
        \Theta \left( \frac{1}{\Delta t } \log \frac{1}{\alpha} \right) \times \Theta(\sqrt{N}) = \Theta \left( \frac{\sqrt{N}}{\epsilon  L} \log \frac{1}{\tilde{c}_l / N} \right) = \Theta \left( \frac{\sqrt{N}}{\epsilon L} \log N \right) 
    \end{equation}
    Finally, we address the repeats. In Line 3, $\mathcal{O}(1)$ repeats are expected for the post-selection to succeed (note that the successful LCU preparation can also be subsumed into this statement); and to achieve failure $p_f$ it is enough to set $I = \mathcal{O}\left(\log \frac{1}{p_f} \right)$, thus giving the overall claimed complexity when multiplied in, satisfying claim (i).

    Suppose instead that for the given value of $N$, $\alpha = \frac{\tilde{c}_l}{N}$ yields a value of $t^*$ (as defined in (\ref{eq:t*-1}) and (\ref{eq:t*-2})) such that $\epsilon > \epsilon^* = t^*$, then $f^{(3)}$ can instead by deployed. In this case, the complexity of $f^{(3)}$ can be swapped in for that of $f^{(1)}$ in (\ref{eq:101}) to obtain the complexity of claim (ii).
\end{proof}

\subsection{Spectral and singular gap estimation}

It is relatively straightforward to use the algorithm of the previous subsection (i.e., to threshold the second largest eigenvalue) to estimate the spectral gap, but with a couple of small complications: (i) the method we propose is based on bisection search, but as the thresholding has a region of transition, the possible region after each application needs to be adjusted accordingly; (ii) it is necessary to be a little careful about the failure probability, such that the entire algorithm fails only with a certain probability, even though the number of iterations of the bisection search is unknown \textit{a priori}. To do this, it is convenient to define the algorithm of Theorem~\ref{thm:singular-threshold} explicitly:
\begin{defn}
    Let $\mathtt{SingularThreshold}(L, \epsilon, p_f)$ be the algorithm described in Theorem~\ref{thm:singular-threshold}.
\end{defn}
Which then gives Algorithm~\ref{alg:spec-gap}, whose performance is then proven in Theorem~\ref{thm:main}. For this it is important to be clear about the definition of relative error.

\begin{defn}
    Say an estimator, $\hat{x}$ of some real, positive quantity $x$ has relative error $\epsilon'$ if
    \begin{equation}
        (1 - \epsilon') x \leq \hat{x} \leq x (1 + \epsilon') 
    \end{equation}
\end{defn}
For our purposes it is convenient to define the closely related notion:
\begin{defn}
    Say an estimator, $\hat{x}$ of some real, positive quantity $x$ has relative' error $\epsilon$ if
    \begin{equation}
        (1 - \epsilon) \hat{x} \leq x \leq \hat{x} (1 + \epsilon) 
    \end{equation}
\end{defn}
It is easy to see that these definitions match up to second order terms in $\epsilon$ (or $\epsilon'$) and so for small relative error essentially coincide. Moreover, for any fixed relative error there is a direct conversion between the two. Hence for simplicity of presentation we press on using relative' error for the main result.

\begin{algorithm}[!t]
\caption{Singular gap estimation to relative' error $\epsilon_\gamma$ with failure probability $p_f$}\label{alg:spec-gap}
\begin{algorithmic}[1]
\Require Unitary block encoding of the transition matrix or symmetrised discriminant, $U_A$; $\epsilon_\gamma$, $p_f$.
\State $\gamma_{\text{max}} = 1$; $\gamma_{\text{min}} = 0$; $p_f' = p_f$.
\While{$\hat{\gamma}(1 + \epsilon_\gamma) < \gamma_\text{max}$ OR $\hat{\gamma}(1 - \epsilon_\gamma) > \gamma_\text{min}$}
\State $\hat{\gamma} = \frac{1}{2}(\gamma_{\text{max}} + \gamma_{\text{min}})$
\State $\epsilon \gets \frac{1}{4} (\gamma_{\text{max}} - \gamma_{\text{min}})$
\State $p_f' \gets \frac{1}{2} p_f'$
\State $T = \mathtt{SingularThreshold}(\hat{\gamma}, \epsilon, p_f)$
\If{$T = 1$}
\State $\hat{\gamma}_{\text{max}} \gets \hat{\gamma} + \epsilon$
\Else{}
\State $\hat{\gamma}_{\text{min}} \gets \hat{\gamma} - \epsilon$
\EndIf
\EndWhile
\State Return $\hat{\gamma}$
\end{algorithmic}
\end{algorithm}

\begin{thm}
\label{thm:main}
    Using Alorithm~\ref{alg:spec-gap}, the singular gap, $\gamma$, of a Markov chain symmetrised discriminant or transition matrix if doubly-stochastic, can be estimated to relative' error $\epsilon_\gamma \leq 1$ and failure probability $p_f$, with 
    \begin{enumerate}[(i)]
    \item $\tilde{\mathcal{O}}  \left( \frac{\sqrt{N}}{ \epsilon_\gamma \gamma} \log \frac{1}{p_f}  \right) $ uses of a block-encoding of the symmetrised discriminant or transition matrix, for arbitrary $\epsilon_\gamma$.
    \item $\tilde{\mathcal{O}}  \left( \frac{\sqrt{N}}{  \sqrt{\gamma}} \log \frac{1}{p_f}  \right) $  uses of a block-encoding of the symmetrised discriminant or transition matrix, for $\epsilon_\gamma > \epsilon_\gamma^*$ where $\epsilon_\gamma^* > 0$ is a critical value that depends on $N$.
    \end{enumerate}
\end{thm}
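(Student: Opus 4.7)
My approach would be to analyze Algorithm~\ref{alg:spec-gap} as a standard bisection search whose overall complexity is dominated by the per-iteration cost of \texttt{SingularThreshold}, with the failure-probability bookkeeping baked into the geometric decay of $p_f'$. The proof naturally splits into three parts: correctness of the bisection invariant, control of the total failure probability, and a geometric-series bound on the cost in each of the two regimes.

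For correctness, I would show by induction that the invariant $\gamma_{\min} \leq \gamma \leq \gamma_{\max}$ is preserved conditioned on each \texttt{SingularThreshold} call succeeding. When $T=1$, Theorem~\ref{thm:singular-threshold} guarantees $\sigma_2 > 1-\hat{\gamma}(1-\epsilon)$, i.e.\ $\gamma < \hat{\gamma}(1-\epsilon) \leq \hat{\gamma}+\epsilon$, so the update $\gamma_{\max} \gets \hat{\gamma}+\epsilon$ is valid; the $T=0$ case is symmetric. Since $\hat{\gamma}$ is the midpoint and $\epsilon$ is a quarter of the current interval width $w$, a direct calculation shows the new width is exactly $3w/4$, giving geometric contraction at rate $3/4$. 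The while-loop termination condition encodes precisely the relative' error target and is met once $w \lesssim 2\epsilon_\gamma \gamma$; hence the number of iterations is $K = \Theta(\log(1/(\epsilon_\gamma \gamma)))$. Because $p_f'$ is halved at each iteration, a union bound over the $K$ calls yields total failure probability at most $p_f \sum_{k \geq 1} 2^{-k} = p_f$.

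For complexity (i), at iteration $k$ I would set $L = \hat{\gamma}_k$ and note the effective absolute tolerance is $\epsilon L \sim w_k$. Theorem~\ref{thm:singular-threshold}(i) then gives per-call cost $\mathcal{O}\bigl(\sqrt{N}/w_k \cdot \log N \cdot \log(2^k/p_f)\bigr)$; since $w_k$ decays geometrically, the sum over $k$ is dominated by its last term, yielding $\tilde{\mathcal{O}}(\sqrt{N}/(\epsilon_\gamma \gamma)\log(1/p_f))$ after substituting $w_K \sim \epsilon_\gamma \gamma$. For (ii), I would define $\epsilon_\gamma^*$ as the value above which every relative parameter passed to \texttt{SingularThreshold} remains above the critical $\epsilon^*$ of Theorem~\ref{thm:singular-threshold}(ii); in this regime each call may use the Dolph-Chebyshev filter $f^{(3)}$ at per-call cost $\mathcal{O}(\sqrt{N/L}\log N \log(2^k/p_f))$, and summing over polylogarithmically many iterations with $L \to \gamma$ produces $\tilde{\mathcal{O}}(\sqrt{N/\gamma}\log(1/p_f))$. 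The main obstacle, and the source of most of the bookkeeping, is carefully reconciling the relative-error parameter consumed by \texttt{SingularThreshold} with the absolute interval width maintained by the bisection, so that both the correctness update and the geometric-series telescoping close under the claimed bounds without leaking additional factors of $\gamma$ or $\hat{\gamma}$.
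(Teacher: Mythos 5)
Your proposal tracks the paper's proof closely: both treat Algorithm~\ref{alg:spec-gap} as a bisection search, observe the interval width contracts by exactly $3/4$ per iteration, halve the per-iteration failure budget so the union bound closes at $p_f$, and then bound total cost by the most expensive iteration(s). Your ``geometric sum dominated by the last term'' framing is an equivalent packaging of the paper's ``worst-case per-iteration cost times iteration count'' bound, so there is no genuine divergence in approach.

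Two steps need fixing. First, the correctness update reads Theorem~\ref{thm:singular-threshold} in the wrong direction: the theorem asserts that $\sigma_2 > 1-\hat{\gamma}(1-\epsilon)$ \emph{implies} output~$1$, not the converse. The correct inference from $T=1$ is that the output-$0$ hypothesis must fail, so $\sigma_2 \geq 1-\hat{\gamma}(1+\epsilon)$, i.e.\ $\gamma \leq \hat{\gamma}(1+\epsilon) \leq \hat{\gamma}+\epsilon$, and \emph{that} is what validates $\gamma_{\max}\gets\hat{\gamma}+\epsilon$ (similarly for $T=0$). Second, the claim $\epsilon L \sim w_k$ silently drops a factor: the algorithm passes $\epsilon = w_k/4$ and $L=\hat{\gamma}_k$, so $\epsilon L = w_k\hat{\gamma}_k/4$ and the per-call cost is $\mathcal{O}(\sqrt{N}/(w_k\hat{\gamma}_k))$, not $\mathcal{O}(\sqrt{N}/w_k)$. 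You rightly flagged the relative-vs-absolute reconciliation as the main bookkeeping hazard; it is, and the paper's own proof is also loose here (it treats $\epsilon > \epsilon_\gamma/2$ at the penultimate iteration, whereas the loop's termination condition actually involves $2\hat{\gamma}\epsilon_\gamma$ rather than $2\epsilon_\gamma$). So while your final $\tilde{\mathcal{O}}$ matches the theorem, the path to it needs the same care the paper should also have taken in tracking $\hat{\gamma}_k$ through the bound.
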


\begin{proof}

First note that the algorithm is based on bisection search and clearly has the correct function when it does not ``fail''; moreover, by setting $p'_f \gets \frac{1}{2} p'_f$ as in Line~5 at each iteration, the union bound guarantees a total failure probability less than $p_f$. So it remains to count the complexity of: (i) \texttt{SingularThreshold} in Line 6 for each call; and (ii) the number of iterations of the while loop. 

Taking these in turn and for now assuming that no failure occurs, it is enough to consider the worst case for each parameter of any call. For $\hat{\gamma}$, we know that $\gamma$ lies in the region $[\gamma_\text{min}, \gamma_\text{max}]$ at each iteration and, as such, the minimal value of $\hat{\gamma}$ for which \texttt{EigenThreshold} will be called is 
\begin{equation}
\hat{\gamma} \geq \frac{1}{2}(\gamma_{\text{max}} + \gamma_{\text{min}}) \geq \frac{\gamma_{\text{max}}}{2} \geq \frac{\gamma}{2}
\end{equation}
Turning to $\epsilon$, in the worst case at the penultimate iteration $\gamma_\text{max} - \gamma_\text{min}$ is marginally larger than $2\epsilon_\gamma$, and so a further iteration is required with $\epsilon = \frac{1}{4} (\gamma_{\text{max}} - \gamma_{\text{min}}) > \frac{1}{2} \epsilon_\gamma$. Finally, the worst case of $p_f'$ is $p_f 2^{- I}$, where $I$ is the number of iterations of the while loop, which is item (ii), to which we now turn.

For this, we note that at each iteration the region $[\gamma_\text{min}, \gamma_\text{max}]$ shrinks by a factor $\frac{3}{4}$, and  so the smallest $I$ that satisfies 
\begin{equation}
    \left( \frac{3}{4} \right)^I < 2 \epsilon_\gamma
\end{equation}
and so we have $I \in \Theta \left( \log \frac{1}{\epsilon_\gamma} \right)$; which also gives $\min p_f' \in \Theta (p_f \epsilon_\gamma)$

Putting this all together, for the worst case parameters and $I$ iterations of the while loop gives:
\begin{equation}
    \Theta \left( \frac{\sqrt{N}}{\epsilon_\gamma \gamma} \log N \log \frac{1}{p_f \epsilon_\gamma} \log \frac{1}{\epsilon_\gamma} \right) = \tilde{O} \left( \frac{\sqrt{N}}{\epsilon_\gamma \gamma} \log \frac{1}{p_f }  \right)
\end{equation}
when \texttt{SingularThreshold} uses $f^{(1)}$, as in claim (i). Conversely, if the value of $N$ and $\epsilon_\gamma$ are such that $f^{(3)}$ can be deployed, then the complexity of claim (ii) follows. (A very minor subtlety being that as this only works for $\Delta < \Delta^*$, for the early iterations of the while loop, $f^{(1)}$ will still be deployed, but it is guaranteed that $f^{(3)}$ will be switched to at some fixed $\Delta^*$, hence the overall complexity is as claimed.)
\end{proof}

It is important to note that this upper-bound on the number of uses of the block-encoding holds when the algorithm does not fail. Suppose that on a certain iteration the eigenvalue threshold does fail, which will result in the region $[1 - \gamma_{\max}, 1 - \gamma_{\min}]$ not actually containing a singular value. Supposing the rest of the algorithm runs correctly, then on each occasion the reduced region will constitute that corresponding to the largest spectral gap, and as such that final result (though incorrect) will be as large as possible (given the failure) and so will have large relative error, and so will terminate promptly. Thus, though the algorithm may with low probability fail to return the correct value, even on these occurrences the run-time decays exponentially, as a catalogue of failures would be required to keep it running.

\subsection{Quasi-optimality of the spectral gap estimation algorithm}

We now show that our algorithm is quasi-optimal in each of $N$, the number of vertices, and $\Delta$ the spectral / singular gap.

\begin{proposition}
    Any algorithm to decide if the singular gap of is greater or less than some $\Delta$ of any Markov chain symmetrised discriminant or transition matrix when the maximum singular value equals one necessarily requires $\Omega (\sqrt{N})$ uses of the Markov chain symmetrised discriminant / transition matrix block encoding.
\end{proposition}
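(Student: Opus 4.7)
The plan is to prove this lower bound by reduction from unstructured search (the quantum $\mathrm{OR}$ problem). Given query access to a Boolean function $f:\{1,\ldots,N\}\to\{0,1\}$ promised to be either identically zero or to have exactly one marked input $i^{*}$, distinguishing these cases requires $\Omega(\sqrt{N})$ quantum queries by the Bennett--Bernstein--Brassard--Vazirani lower bound. The idea is to construct a family of Markov chains whose singular-gap decision problem is at least as hard as this.

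First I would construct, for each such $f$, a doubly-stochastic ergodic Markov chain $P_f$ (or equivalently its symmetrised discriminant) on $\{1,\ldots,N\}$ with two properties:
\begin{enumerate}[(a)]
    \item A unitary block-encoding of $P_f$ in the form of (\ref{eq:UP-def}) can be implemented using $\mathcal{O}(1)$ queries to $f$.
    \item There exist fixed constants $\Delta_{\text{high}} > \Delta_{\text{low}}$, independent of $N$, such that if $f\equiv 0$ then the singular gap of $P_f$ exceeds $\Delta_{\text{high}}$, whereas if $f$ has a unique marked input then the singular gap is below $\Delta_{\text{low}}$.
\end{enumerate}
A natural candidate is a modification of the ``uniform'' chain $P_0=\tfrac{1}{N}J$ (whose singular values are $1,0,\ldots,0$, giving singular gap $1$): when $f(i)=1$ the chain is made ``sticky'' at vertex $i$ by mixing in a small amount of the self-loop, in a manner symmetrised so that double-stochasticity is preserved (e.g.\ $P_f=(1-\eta)\tfrac{1}{N}J+\eta\,\mathrm{Sym}(i^{*})$ for a suitable doubly-stochastic perturbation $\mathrm{Sym}(i^{*})$ supported near $i^{*}$). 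The ``sticky'' marked vertex creates a nontrivial second singular value of order $1-\Theta(\eta)$, collapsing the singular gap to a small constant, while the unmarked chain retains a singular gap of $1$.

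Second, I would take any hypothetical algorithm that decides whether the singular gap is above or below a threshold $\Delta$ with $\Delta_{\text{low}} < \Delta < \Delta_{\text{high}}$ using $Q$ uses of the block-encoding of $P_f$. Composing with the $\mathcal{O}(1)$-query block-encoding construction from (a), this yields a quantum algorithm that solves the $\mathrm{OR}$ decision problem for $f$ using $\mathcal{O}(Q)$ queries to $f$. The BBBV lower bound then forces $Q=\Omega(\sqrt{N})$, completing the reduction.

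The main obstacle is engineering the family $\{P_f\}$ so that all three requirements hold simultaneously: double-stochasticity (so that the unscaled block-encoding regime of Lemma~\ref{lem:when-block-encodable} applies), a constant separation in singular gaps between the two cases uniformly in $N$, and oracle-efficient block-encoding in the form (\ref{eq:UP-def}) (requiring that the state preparation $U_{P_f}\ket{0}\ket{i}$ be implementable with $\mathcal{O}(1)$ calls to $f$ on input $i$). The third point is the most delicate, because the transition weights out of each vertex must be computable from a single query to $f$; this can be arranged by letting $f(i)$ locally select between two precomputed state-preparation sub-circuits. The spectral-gap analysis for the marked case then reduces to a standard rank-one perturbation computation.
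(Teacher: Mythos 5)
Your proposal takes essentially the same route as the paper: reduce from unstructured search (the OR problem), construct a two-element family of Markov chains whose singular gap encodes whether the oracle is all-zeros, and invoke the BBBV $\Omega(\sqrt{N})$ lower bound. The high-level strategy is correct. However, two concrete issues need fixing. First, the example perturbation is backwards: with $P_f=(1-\eta)\tfrac{1}{N}J+\eta\,\mathrm{Sym}(i^{*})$ and $\eta$ small, the vector $e_{i^*}-\tfrac{1}{N}\boldsymbol{1}$ is fixed by $\mathrm{Sym}(i^*)$ and annihilated by $J$, so the second singular value is $\Theta(\eta)$, not $1-\Theta(\eta)$, and the singular gap in the marked case is $1-O(\eta)$ --- hardly separated from the unmarked gap of $1$. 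You would need $\eta$ close to $1$ (or swap the roles of $\eta$ and $1-\eta$) to get the claimed constant separation. Second, the $\mathcal{O}(1)$-query block-encoding is more delicate than ``let $f(i)$ select between two sub-circuits'': precisely because you insist on double-stochasticity, the outgoing transition weights from an \emph{unmarked} vertex $i$ must depend on whether the \emph{target} $j$ equals $i^{*}$, so preparing $U_{P_f}\ket{0}\ket{i}$ requires querying $f$ in superposition on the target register as well (compute, adjust amplitudes, uncompute). This is still $\mathcal{O}(1)$ queries, but it is a real subtlety your sketch misses. The paper sidesteps both issues by dropping the double-stochasticity requirement: it uses the Szegedy walk on the chain where the marked vertex is absorbing, so each row of $P$ depends only on $f$ at the source vertex, and then proves the lower bound for the block-encoding of the symmetrised discriminant $D$ (which the proposition also covers), whose singular values are $1$, $\tfrac{N-1}{N}$, $0,\ldots,0$ in the marked case versus $1$, $0,\ldots,0$ in the unmarked case. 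Your doubly-stochastic variant would, if repaired, prove the slightly stronger statement that the lower bound holds even for genuine transition-matrix block-encodings, at the cost of the above complications.
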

\begin{proof}
This is proved by showing that faster scaling would violate the $\Omega(\sqrt{N})$ bound on unstructured search. In particular, based on Ref.~\cite{Szegedy}, the unstructured search oracle can be used to construct a unitary (that can be viewed as a) block-encoding of $D$ for two different cases:
\begin{enumerate}
    \item When no element is marked, then the Markov chain amounts to random sampling, i.e., 
    \begin{equation}
        P = \frac{1}{N} \begin{bmatrix} 1 & 1 & \dots & 1 \\
        1&  1 & \dots & 1 \\ \vdots & \vdots & \ddots & \\
        1 & 1 & & 1 \end{bmatrix} \implies D = \frac{1}{N} \begin{bmatrix} 1 & 1 & \dots & 1 \\
        1&  1 & \dots & 1 \\ \vdots & \vdots & \ddots & \\
        1 & 1 & & 1 \end{bmatrix}
    \end{equation}
    Which has singular values $1$ (unique) and 0 (with multiplicity $N-1$). 
    \item When there is a single marked element (shown as the $N$th element for illustrative simplicity, but this is not required for the following analysis, which is completely general), then
    \begin{equation}
        P = \frac{1}{N} \begin{bmatrix} 1 & 1 & \dots & 1 \\
        1&  1 & \dots & 1 \\ \vdots & \vdots & \ddots & \\
        0 & 0 & & N \end{bmatrix} \implies D = \frac{1}{N} \begin{bmatrix} 1 & 1 & \dots & 0 \\
        1&  1 & \dots & 0 \\ \vdots & \vdots & \ddots & \\
        0 & 0 & & N \end{bmatrix}
    \end{equation}
    Which has singular values $1$ (unique), $\frac{N-1}{N}$ (unique) and $0$ (with multiplicity $N-2$).
\end{enumerate}
So it follows that, for example, running \texttt{SingularThreshold}$\left( \frac{1}{2}, \frac{1}{2}, \frac{1}{3} \right)$ distinguishes these two possibilities for all $N$ greater than $4$. So should this do so with scaling better than $\Theta (\sqrt{N})$ then it would be a violation of the known lower-bound on the complexity of deciding the OR problem \cite{quadratic-lower-bound}.
\end{proof}

The lower-bound on deciding the OR problem is typically proven using the adversary method \cite{Ambainis2002,Ambainis2009} or the polynomial method \cite{BBCMW2001}. The latter provides a template to show that the version of the algorithm deploying $f^{(3)}$ is additionally quasi-optimal in terms of the scaling with $\frac{!}{\Delta}$.

\begin{proposition}
    No algorithm (whose initial state is independent of the singular values) to decide if the singular gap is greater or less than some $\Delta$ to some fixed relative error can do so with complexity scaling less than $\Omega \left( \frac{1}{\sqrt{\Delta}} \right)$.
\end{proposition}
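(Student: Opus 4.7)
The plan is to apply the polynomial method of Ref.~\cite{BBCMW2001}, adapted to block-encoding queries, combined with Bernstein's inequality for trigonometric polynomials. The core tool is that for a carefully chosen one-parameter family of matrices, the acceptance probability of any $T$-query decision algorithm can be represented as a trigonometric polynomial of degree at most $2T$ in a natural angle parameter, to which standard polynomial-approximation lower bounds apply.

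As the adversarial family I would take $A_\theta = (1-\theta) I + \theta \cdot J/N$ for $\theta \in [0,1]$, where $J$ is the $N\times N$ all-ones matrix. This is a doubly-stochastic transition matrix (hence block-encodable without scaling by Lemma~\ref{lem:when-block-encodable}) with $\sigma_1 = 1$ and $\sigma_2 = \cdots = \sigma_N = 1-\theta$, so its singular gap is exactly $\theta$; crucially, the singular vectors do not depend on $\theta$. An algorithm that decides ``gap greater or less than $\Delta$'' to fixed relative error $\epsilon_0$ must, when restricted to this family, distinguish $\theta \geq \Delta(1+\epsilon_0)$ from $\theta \leq \Delta(1-\epsilon_0)$ with constant success probability.

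Next I would invoke the polynomial method. Because the singular vectors are $\theta$-independent, each call to $U_{A_\theta}$ (or $U_{A_\theta}^\dagger$) acts within the relevant 2-dimensional ancilla-extended block as a matrix whose entries are $\cos\phi$ and $\pm\sin\phi$ with $\cos\phi = \sigma_2 = 1-\theta$. The amplitude of any fixed output state after $T$ queries is therefore a trigonometric polynomial in $\phi$ of degree $\leq T$, and the acceptance probability $p(\phi)$ is a real trigonometric polynomial of degree $\leq 2T$, bounded in $[0,1]$, satisfying $p(\phi_+) \geq 2/3$ and $p(\phi_-) \leq 1/3$ (or vice-versa), where $\phi_\pm = \arccos(1-\Delta(1\pm\epsilon_0))$ are both of order $\sqrt{\Delta}$ for small $\Delta$.

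Finally, Bernstein's inequality for trigonometric polynomials gives $|p'(\phi)|\leq 2T$ for all $\phi$. A short Taylor expansion shows $\phi_+ - \phi_- = \Theta(\epsilon_0 \sqrt{\Delta})$, and the mean value theorem then forces $|p'(\xi)| \geq \Omega(1/(\epsilon_0\sqrt{\Delta}))$ for some $\xi$ between them; combining with the Bernstein bound yields $T = \Omega(1/\sqrt{\Delta})$ for any fixed $\epsilon_0$, as required. The main obstacle in making this rigorous is the polynomial-method step: one must pin down a specific block-encoding convention for the family $\{U_{A_\theta}\}$ under which the dependence on $\theta$ enters the acceptance probability only through $\cos\phi$ and $\sin\phi$, so that the degree bound is genuinely $\leq 2T$. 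This is routine for the canonical Hermitian block-encoding but needs care for general block-encoding conventions, and is where the technical subtlety lies.
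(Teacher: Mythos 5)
Your proof is correct and lands on the same $\Omega(1/\sqrt{\Delta})$ bound via the polynomial method, but the two routes differ in a worthwhile way. The paper parameterises the acceptance probability as an ordinary polynomial $\tilde F(\lambda_1)$ of degree $\leq 2d$ over a single-variable slice of a diagonal multi-parameter family, then applies the Markov brothers' inequality $|\tilde F'| \leq (\deg \tilde F)^2$ on $[0,1]$; the $\sqrt{\Delta}$ emerges from the quadratic degree-dependence of the Markov bound. You instead pick the one-parameter convex family $A_\theta = (1-\theta)I + \theta J/N$ (a clean, always-symmetric, always-ergodic, doubly-stochastic instance), change variables to the qubitization angle $\phi = \arccos(1-\theta)$, treat the acceptance probability as a trigonometric polynomial of degree $\leq 2T$, and apply Bernstein's inequality $|p'| \leq 2T$; the $\sqrt{\Delta}$ then falls out of the geometry $\arccos(1-x) \sim \sqrt{2x}$. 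The two are essentially dual (Markov follows from Bernstein by the substitution $x = \cos\phi$), but your parameterisation buys something real: working in the angle variable makes the degree-$\leq 2T$ claim manifest for a Hermitian block-encoding via qubitization, whereas the paper's assertion that the acceptance probability is a \emph{polynomial} in $\sigma_i$ (rather than in $\sigma_i$ and $\sqrt{1-\sigma_i^2}$) actually rests on the same qubitization/trig-polynomial fact and is stated without proof. You correctly flag the dependence on the block-encoding convention as the one technical point to pin down; for a Hermitian block-encoding of a symmetric $A_\theta$ (e.g.\ $U_P^\dagger \F U_P$ as in Corollary~\ref{cor:UP-original}), this is standard. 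One minor presentational point: your bound $T = \Omega(1/(\epsilon_0\sqrt{\Delta}))$ is slightly sharper in its dependence on the relative error $\epsilon_0$ than the paper's $\Omega(1/\sqrt{t\Delta})$, though for fixed relative error both reduce to the same statement.
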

\begin{proof}
This can be proved by an application of the polynomial method. Let the algorithm act on an initial state $\ket{0} \ket{0^n} \ket{0^m}$, where the block encoding always is applied such that it is the $\ket{0} \ket{0^n}$ register that is transformed. Moreover, let $\{ \ket{u_i} \}_{i=1}^N$ be the full set of right singular vectors of the block-encoded matrix $A$. As such any state of the correct size can be expressed:
\begin{equation}
    \ket{\psi} = \sum_i \sum_j a_{i,j} \ket{0} \ket{u_i} \ket{j} + \sum_i \sum_j b_{i,j} \ket{1} \ket{u_i} \ket{j}
\end{equation}
We now consider that any circuit applied to the state can be represented $\prod_{k=1}^d W_i (U_A \otimes I)$ where $W_i$ are unitary blocks that are independent of the singular values, and may include swap gates so that there is no loss of generality in always assuming the block-encoding $U_A$ is applied to the initial qubits. Furthermore, if $\ket{\psi^{(d)}} = \prod_{k=1}^d W_i (U_A \otimes I) \ket{0^{n+m+1}}$ is the final state then,
\begin{equation}
    \ket{\psi^{(d)}} = \sum_i \sum_j a_{i,j}^{(d)} \ket{0} \ket{u_i} \ket{j} + \sum_i \sum_j b_{i,j}^{(d)} \ket{1} \ket{u_i} \ket{j}
\end{equation}
then $a_{i,j}^{(d)}$ and $b_{i,j}^{(d)}$ are polynomials in $\{\sigma_i\}_{i=1}^N$ of degree at most $d$. Including in the final unitary block the operation needed to map the 0 or 1 output bit to a single qubit measurement, then we can further see that the measurement outcome probability is itself a polynomial in $\{\sigma_i\}_{i=1}^N$ of degree $2d$ (by the Born rule). 

If we let the relative error be some constant $t$ which is chosen such that the version of the algorithm using $f^{(3)}$ can be deployed, and (departing slightly from the above established convention) further consider the singular values $\{ \sigma_i \}_{i=1}^N$ to simply be indexed, but not necessarily ordered, then we can consider measurement outcome as a function, $F$, of $\{ \sigma_i \}_{i=1}^N$, where each singular value varies between 0 and 1. For concreteness we may consider a family of Markov chains with some $N$ vertices, where the $N^{\text{th}}$ state is absorbing, and every other vertex has a self-loop with probability $\lambda_i$ for the $i^{\text{th}}$ vertex and with probability $1 -  \lambda_i$ transitions to the absorbing state. This family of Markov chains has symmetrised discriminant that is diagonal in the computational basis, and apart from the absorbing state has arbitrary single values:
\begin{equation}
    D = \begin{bmatrix}
        \lambda_1 & 0 & \dots & 0 \\ 0 & \lambda_2 & \dots & 0 \\ 
        \vdots & \vdots & \ddots &  \\ 
        0 & 0 &  & 1 \\ 
    \end{bmatrix}
\end{equation}
In particular, in this case the aforementioned function $F$ is parameterised by $\{ \sigma_i \}_{i=1}^{N-1}$ as $\sigma_i = \lambda_i$ for $i = 1 \dots N-1$ (and $\sigma_N = 1$). If we consider a slice of this function corresponding to $\lambda_2 \dots \lambda_{N-1} = 0$, then the singular gap is determined only by the value of $\lambda_1$. For notational simplicitly, we call this slice $\tilde{F}$ -- a function of a single argument ($\lambda_1$) -- and to correctly decide the spectral gap, must satisfy:
\begin{align}
    \tilde{F}(\Delta(1+t)) & \leq \frac{1}{3} \\ 
    \tilde{F}(\Delta(1-t)) & \geq \frac{2}{3} 
\end{align}
This means that $\tilde{F}$ must rise by an amount $\frac{1}{3}$ in a width $2 \Delta t$, and so it must have a gradient at least $\frac{1 / 3}{2 \Delta t} = \frac{1}{6 \Delta t}$ at some point. Moreover, by the Markov brothers' inequality \cite{Markov1890}, 
\begin{equation}
    \Bigg| \frac{\mathrm{d} \tilde{F}(x)}{\mathrm{d} x} \Bigg| \leq (\text{deg} \tilde{F})^2 \frac{\tilde{F}_{\max} - \tilde{F}_{\text{min}}}{x_{\max}-x_{\min}}
\end{equation}
and as we are concerned with box $[0,1] \times [0,1]$ the final term equals one. Moreover, substituting in $(\text{deg} \tilde{F}) \leq 2d$ and $\Big| \frac{\mathrm{d} \tilde{F}(x)}{\mathrm{d} x} \Big| \geq \frac{1}{6 \Delta t}$, we get
\begin{equation}
    \frac{1}{6 \Delta t} \leq (2d)^2  \implies d \geq \sqrt{ \frac{1}{24 \Delta t}} \implies d \in \Omega \left( \frac{1}{\sqrt{\Delta}} \right)
\end{equation}
as claimed. 
\end{proof}

\section{Discussion}
\label{sec:disc}

This paper proposes a quantum algorithm for Markov chain estimation that is quasi-optimal in the number of vertices for all parameters, and additionally quasi-optimal in the reciprocal of the spectral gap itself if the relative error is allowed to be above a certain threshold. This improves on the state of the art in terms of asymptotic scaling, and, as motivated in the introduction, is a practically-useful quantum advantage stemming from the QSVT. We also give two explicit block-encoding methods for Markov chain transition matrices -- to our knowledge the first such block encodings when the transition matrix does not coincide with the Markov chain symmetrised discriminant. The results in this paper also suggest a number of future research directions, and in particular prompts the following three questions.

\subsection{Open problems}

\begin{enumerate}
    \item Can we find explicit block-encoding methods for other types of doubly-stochastic matrices.
    \item Can we improve upon the two filtering polynomials deployed in this paper. In particular, can we find a filter that retains the optimal scaling in $\frac{1}{\Delta}$ and works for a greater range of $t$ than does $f^{(3)}$.
    \item Can we estimate or decide other Markov chain / graph properties using the block-encoding techniques herein. For instance, to give a concrete example of where a Markov chain of the type considered in Section~\ref{subsec:group-P} might arise, suppose that we would like to determine if a graph is bipartite. If our graph contains $N$ vertices, we could define a Markov chain over the group $\{0,1\}^N$, where we identify $\{0,1\}$ with the cyclic group of order $2$. An element $(a_1,...,a_N)$ in $\{0,1\}^N$ represents the partition with the $i$th vertex of the graph contained in the part $a_i$ (for example, $(1,1,0,0)$ means vertices $1,2$ are in one part and $3,4$ are in another). We'd mark elements of $\{0,1\}^N$ if they satisfy the required constraints that no adjacent vertices lie in the same part. We are free to define $\mu$ however we'd like, and an important research question is whether this could chosen to obtain a quantum advantage. Another important research question is whether this idea can be generalized to $k$-partiteness by replacing $\{0,1\}$ with the cyclic group of order $k$.
\end{enumerate}

\subsection*{Acknowledgements}

Many thanks to Yuta Kikuchi and Greg Boyd for reviewing this paper.


\end{document}